\title{Utility Maximization, Risk Aversion, and Stochastic Dominance}
\author[M. Beiglb\"ock]{Mathias Beiglb\"ock}
\address{Universit\"at Wien, Fakult\"at f\"ur Mathematik, Nordbergstrasse 15 \endgraf A-1090 Wien, Austria}
\email{mathias.beiglboeck@univie.ac.at }
\author[J. Muhle-Karbe]{Johannes Muhle-Karbe}
\address{ETH Z\"urich, Departement Mathematik, R\"amistrasse 101 \endgraf CH-8092, Z\"urich, Switzerland}
\email{johannes.muhle-karbe@math.ethz.ch}
\author[J. Temme]{Johannes Temme}
\address{Universit\"at Wien, Fakult\"at f\"ur Mathematik, Nordbergstrasse 15 \endgraf A-1090 Wien, Austria}
\email{johannes.temme@univie.ac.at}
\date{\today}
\thanks{The first and third author gratefully acknowledge financial support from the Austrian Science Fund (FWF),
under grant P21209 resp.\  P19456. The second author gratefully acknowledges financial support by the National Centre of
Competence in Research ``Financial Valuation and Risk Management'' (NCCR FINRISK),
Project D1 (Mathematical Methods in Financial Risk Management), of the Swiss National Science Foundation.
All authors thank Marcel Nutz and Walter Schachermayer for fruitful discussions, and an anonymous referee for numerous constructive comments.}
\keywords{Utility maximization, risk aversion, stochastic dominance}
  \numberwithin{equation}{section}
  \theoremstyle{definition}
  \newtheorem{definition}{Definition}[section]
  \renewcommand{\thedefinition}{\arabic{section}.\arabic{definition}}
  \theoremstyle{definition}
  \newtheorem{remark}[definition]{Remark}
  \theoremstyle{plain}
  \newtheorem{theorem}[definition]{Theorem}
  \theoremstyle{plain}
  \newtheorem{lemma}[definition]{Lemma}
  \theoremstyle{plain}
  \newtheorem{proposition}[definition]{Proposition}
  \theoremstyle{plain}
  \newtheorem{corollary}[definition]{Corollary}
  \newcommand{\BbbR}{\ensuremath{\mathbb {R}}}
  \newcommand{\BbbN}{{\mathbb {N}}}
  \newcommand{\BbbP}{{\mathbb {P}}}
  \newcommand{\agent}[1]{{\ensuremath{\ifthenelse{\equal{#1}{1}}{L}{\ifthenelse{\equal{#1}{2}}{M}{\ifthenelse{\equal{#1}{RL}}{L}{\ifthenelse{\equal{#1}{RA}}{M}{TBA}}}}}}}
  \newcommand{\tildeexpval}[2][]{\ensuremath{\ifthenelse{\equal{#1}{}}{\tilde{\mathbb{E}}\left[#2\right]}{\tilde{\mathbb{E}}\left[#2|#1\right]}}}
  \newcommand{\Q}{\mathbb Q}
  \renewcommand{\P}{\mathbb P}
  \newcommand{\convmon}{\leq_{MC}}
  \newcommand{\conv}{\leq_{C}}
  \newcommand{\mra}{\ll}
  \newcommand{\stochexp}[1]{\ensuremath{\mathcal{E}\left(#1\right)}}
  \newcommand{\expval}[2][]{\ensuremath{\ifthenelse{\equal{#1}{}}{\mathbb{E}\left[#2\right]}{\mathbb{E}\left[#2|#1\right]}}}
\renewcommand{\epsilon}{\varepsilon}
\newcommand{\eps}{\varepsilon}
\newcommand{\E}{\mathbb {E}}
\newcommand{\R}{\mathbb R}
\begin{document}

  \begin{abstract}
    Consider an investor trading dynamically to maximize expected
utility from terminal wealth. Our aim is to study the dependence
between her risk aversion and the distribution of the optimal terminal
payoff.
    Economic intuition suggests that high risk aversion leads to a
rather concentrated distribution, whereas lower risk aversion results
in a higher average payoff at the expense of a more widespread
distribution.
    Dybvig and Wang [\textit{J. Econ.\ Theory}, 2011, to appear] find
that this idea can indeed be turned into a rigorous mathematical
statement in one-period models. More specifically, they show
that lower risk aversion leads to a payoff which is larger in terms of second
order stochastic dominance.

In the present study, we extend their results to (weakly) complete
continuous-time models. We also complement an ad-hoc counterexample of
Dybvig and Wang, by showing that these results are ``fragile'', in the
sense that they fail in essentially \emph{any} model, if the latter is
perturbed on a set of arbitrarily small probability.
 On the other hand, we establish that they hold for power investors in models with (conditionally) independent
increments. \\

\noindent
\emph{JEL classification codes}: G11, C61.
 \end{abstract}
\maketitle

\section{Introduction}
A classical problem in mathematical finance and financial economics is to maximize expected utility from terminal wealth. This means that -- given a time horizon $T$ and a utility function $U$ describing the investor's preferences -- one tries to choose a trading strategy such that the terminal value $\hat{X}_T$ of the corresponding wealth process maximizes $\expval{U(X_T)}$ over all wealth processes of competing strategies. Existence and uniqueness of the maximizer $\hat{X}$ are assured in very general models and under essentially minimal assumptions (cf., e.g., \cite{kramkov.schachermayer.99} and the references therein). However, much less is known about the qualitative properties of $\hat X$ and, in particular, about their dependence on the investor's attitude towards risk measured, e.g., in terms of her \emph{absolute risk aversion} $-U''/U'$.

Since comparative statics for the composition of the investor's portfolio are impossible to obtain in any generality, Dybvig and Wang (\cite{dybvig_wang09}, henceforth DW) have recently proposed to compare the \emph{distributions} of the optimal payoffs instead. They show that -- in one-period models -- the payoffs of investors with ordered absolute risk aversion can be ranked in terms of stochastic dominance relationships. More specifically, suppose investor $L$ is less risk averse than the more risk inverse investor $M$,  and the corresponding optimal payoffs $\hat X^M_T, \hat X^L_T$ have finite first moments. Then \cite[Theorems 3 and 7]{dybvig_wang09} assert that $\hat X^L_T$ dominates $\hat X^M_T$ in the \emph{monotone convex order},
\begin{align}\label{WeakResult} \hat X^M_T\leq_{MC} \hat X^L_T,\end{align} 
that is, $\E [c(\hat X^M_T)]\leq \E [c(\hat X^L_T)]$ for every monotone increasing convex function $c:\R_+\to \R$. Moreover, if either of the utility functions has nonincreasing absolute risk aversion, then DW also obtain the sharper assertion that
  \begin{align}\label{EResult}
    \E [\hat X^M_T] & \leq \E [\hat X^L_T] \quad\mbox{and}\\
    (\hat X^M_T - \E [\hat X^M_T]) & \leq_{C}(\hat X^L_T- \E [\hat X^L_T]), \label{StrongResult}
  \end{align}
where $\leq_C$ denotes the \emph{convex order}, i.e., \eqref{StrongResult} asserts that, for every convex function $c$, $\E[c(\hat X^M_T-\E [\hat X^M_T])]\leq \E [c(\hat X^L_T-\E [\hat X^L_T])]$. Both $\leq_{C}$ and $\convmon$ are \emph{second order stochastic dominance} relations.

By Strassen's characterization of the convex order, cf.\ \cite{strassen65existence}, \eqref{StrongResult} is tantamount to the existence of a random variable $\eps$ with $\E[\eps|\hat X^M_T]=0$ such that, in distribution, 
\begin{align}\label{StrassenStrong}
\hat X^L_T = \hat X^M_T+ (\E[\hat X^L_T]-\E[\hat X^M_T]) +\eps.
\end{align}
In plain English, this means that the less risk averse investor is willing to accept the extra noise $\eps$ in exchange for the additional risk premium $\E[\hat X^L_T]-\E [\hat X^M_T] \geq 0$. 

In addition, DW also construct some counterexamples showing that the above results generally do not hold in incomplete markets.

The purpose of the present study is threefold. Firstly, we prove an analogue of the main result of DW -- which is stated in the discrete one-period setting common in much of economics -- in the continuous-time framework prevalent in mathematical finance, under the assumption that the market is (weakly) complete. Whereas it would also be possible to extend the approach of DW, we believe that our presentation is both more compact and more transparent.

Next, in Section 3, we shed more light on the fragility of this structural result in incomplete markets. Whether the counterexamples of DW use somewhat ad-hoc models and utility functions, we show that -- even for investors with power utilities -- the result does not hold in \emph{any} finite state model, if the latter is perturbed by adding just a single extra branch with arbitrarily small probability. 

Finally, in Section 4, we take a look at additional structural assumptions which ensure the validity of DW's result also in incomplete markets. More specifically, we show that it holds for power utility investors if the increments of the return processes are independent or, more generally, independent conditional on some stochastic factor process. We emphasize that even though the utility maximization problem can be solved fairly explicitly in this setup, the stochastic dominance relationship apparently cannot be read off the formulas. Instead, we prove the result by induction in a discrete approximation of the model and then pass to the limit. An extension of this result to more general preferences and/or market models appears to be a challenging topic for future research.

\section{(Weakly) Complete Markets}

Fix a filtered probability space $(\Omega,\mathcal{F}, (\mathcal{F}_t)_{t \in [0,T]},\mathbb{P})$. We consider a market of one riskless and $d$ risky assets and work in discounted terms. That is, the riskless asset is supposed to be normalized to $1$, whereas the (discounted) price process of the risky asset is assumed to be modeled by an $\mathbb{R}^d$-valued semimartingale $S$.

\subsection{Utilities defined on the positive halfline}

 The investor's preferences are described by a \emph{utility function}. Here we first consider the case where the latter is defined on the positive halfline. That is, it is assumed to be a strictly increasing, strictly concave, twice differentiable mapping $U: \mathbb{R}_+ \to \mathbb{R} \cup \{\infty\}$ satisfying the Inada conditions $\lim_{x\to\infty}U'(x)=0$ and $\lim_{x\to 0}U'(x)=\infty$.  Given a utility function $U$, the quotient $-U''(x)/U'(x)$ is called the \emph{absolute risk aversion} of $U$ at $x\in(0,\infty)$, cf.\ \cite{pratt64risk,arrow65aspects}. An investor with utility function $U_\agent{RA}$ is called \emph{more risk averse} than an agent with utility function $U_\agent{RL}$, written $U_\agent{RA}\mra U_\agent{RL}$, if the absolute risk aversion of $U_\agent{RA}$ dominates the absolute risk aversion of $U_\agent{RL}$ pointwise. In the sequel we frequently use that $U_\agent{RA}\mra U_\agent{RL}$ if and only if $U'_\agent{RL}(x)/U'_\agent{RA}(x)$ is monotone increasing for all $x\in(0,\infty)$.

For the remainder of this section, we suppose that the market is \emph{complete}, i.e., that the set of equivalent (local) martingale measures is a singleton $\mathbb{Q}$,\footnote{In fact, an inspection of the proofs shows that it is sufficient to assume that the dual minimizer of \cite{kramkov.schachermayer.99} is the same for both agents. If this holds for \emph{all} agents, this property has been called \emph{weak completeness} of the financial market, see \cite{kramkov_sirbu06sensitivity, schachermayer_sirbu09financial}.} and consider the problem of maximizing expected utility, $\sup_{X_T}\expval{U(X_T)}$. Here, $X_T$ runs though the terminal values of all wealth processes $X$ that can be generated by self-financing trading starting from an initial endowment $x>0$, and satisfy the admissibility condition $X \geq 0$. Throughout, we suppose that the supremum is finite, as, e.g., for utility functions that are bounded from above. Then, it is well-known (cf., e.g., \cite[Theorem 2.0]{kramkov.schachermayer.99}) that there is a unique optimal wealth process $\hat{X}$ related to the martingale measure $\mathbb{Q}$ via the first-order condition
\begin{equation}\label{eq:duality}
U'(\hat{X}_T)=y\frac{d\mathbb{Q}}{d\mathbb{P}}.
\end{equation}
Here, the Lagrange multiplier $y$ is a constant given by the marginal indirect utility of the initial capital $x$ (cf., e.g., \cite{kramkov.schachermayer.99} for more details).

For a more risk averse investor with utility function $U_M$ and a less risk averse investor with utility function $U_L$, we are now able to state our first main result, the counterpart of \cite[Theorem 3]{dybvig_wang09} in continuous time: Lower risk aversion leads to a terminal payoff that is larger in the monotone convex order \eqref{WeakResult}. 

\begin{theorem}\label{thm:monotone_convex_order}
  Let $\hat X^\agent{RA}_T$, $\hat X^\agent{RL}_T$ be integrable and suppose that $U_\agent{RA}\mra U_\agent{RL}$. Then 
  $$\hat X^\agent{RA}_T\convmon \hat X^\agent{RL}_T.$$
\end{theorem}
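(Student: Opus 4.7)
The plan is to exploit the fact that, in a complete market, both optimal payoffs are pointwise functions of the single random variable $Z := d\Q/d\P$. By the first-order condition \eqref{eq:duality} one has $\hat X^M_T = I_M(y_M Z)$ and $\hat X^L_T = I_L(y_L Z)$, where $I_j := (U_j')^{-1}$ is strictly decreasing by strict concavity of $U_j$. Hence both payoffs are comonotonic decreasing functions of $Z$, and one may write $\hat X^L_T = \psi(\hat X^M_T)$ for the strictly increasing map $\psi$ implicitly defined by $U_L'(\psi(x)) = (y_L/y_M) U_M'(x)$. The assumption $U_M \mra U_L$, equivalent to monotonicity of $U_L'/U_M'$, forces $\psi(x) = x$ to have at most one solution $x^*$, with $\psi(x) - x$ having the same sign as $x - x^*$. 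Degenerate cases are dispatched quickly: if $\psi \geq \mathrm{id}$ everywhere the claim is immediate, whereas if $\psi \leq \mathrm{id}$ everywhere the budget identity $\E[Z \hat X^M_T] = \E[Z \hat X^L_T] = x$ together with $Z > 0$ a.s.\ forces $\hat X^L_T = \hat X^M_T$ a.s.

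In the non-degenerate case, I set $z^* := U_M'(x^*)/y_M$, so that $\{Z \leq z^*\} = \{\hat X^M_T \geq x^*\} = \{\hat X^L_T \geq \hat X^M_T\}$. The budget identity rewrites, for any $\lambda \in \R$,
\begin{equation*}
\E[c(\hat X^L_T)] - \E[c(\hat X^M_T)] = \E\bigl[c(\hat X^L_T) - c(\hat X^M_T) - \lambda Z(\hat X^L_T - \hat X^M_T)\bigr].
\end{equation*}
Choosing $\lambda := c'(x^*)/z^* \geq 0$, I would argue the integrand is pointwise non-negative: convexity gives $c(\hat X^L_T) - c(\hat X^M_T) \geq c'(\hat X^M_T)(\hat X^L_T - \hat X^M_T)$, and since $c'$ is non-decreasing while the signs of $\hat X^M_T - x^*$ and $\hat X^L_T - \hat X^M_T$ always coincide, this in turn dominates $c'(x^*)(\hat X^L_T - \hat X^M_T)$. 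The integrand is therefore bounded below by $c'(x^*)(1 - Z/z^*)(\hat X^L_T - \hat X^M_T)$, a product of two factors which switch sign simultaneously at $z^*$; the scalar $c'(x^*) \geq 0$ because $c$ is non-decreasing, so the bound is non-negative on all of $\Omega$.

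The conceptual heart of the argument is the single-crossing property of $\psi - \mathrm{id}$ provided by the risk-aversion ordering, combined with the Lagrangian-type rewrite that turns the budget constraint into a free parameter $\lambda$ to be tuned. The remaining issues---handling non-smooth $c$ by approximation (or, equivalently, verifying the inequality only on the generators $c(x) = (x-k)^+$, in which case $c'(x^*)$ should be read as a subgradient) and treating the two boundary cases described above---amount to routine bookkeeping rather than a genuine obstacle.
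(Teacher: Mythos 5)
Your proposal is correct, and it shares the paper's skeleton: both arguments rest on the first-order condition writing each optimal payoff as a decreasing function of the density $Z=d\Q/d\P$, the single-crossing property of these two functions forced by the ordering of risk aversions, and the common $\Q$-budget identity $\E[Z\hat X^M_T]=\E[Z\hat X^L_T]=x$ (which, as in the paper, should be justified by the $\Q$-martingale property of the value processes from \cite[Theorem 2.0]{kramkov.schachermayer.99}). Where you diverge is the finishing move. The paper first extracts the intermediate inequality $\E[\hat X^M_T]\leq\E[\hat X^L_T]$ from $(Z-p)(\hat X^L_T-\hat X^M_T)\leq 0$, then reduces to hockey-stick functions and splits into the cases $K\geq q$ and $K\leq q$. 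You instead tune a Lagrange multiplier $\lambda=c'(x^*)/z^*$ so that the entire integrand $c(\hat X^L_T)-c(\hat X^M_T)-\lambda Z(\hat X^L_T-\hat X^M_T)$ is pointwise nonnegative, using the gradient inequality for $c$ together with the fact that $c'(\hat X^M_T)-c'(x^*)$, $1-Z/z^*$ and $\hat X^L_T-\hat X^M_T$ all change sign at the same crossing point; the same covariance-type inequality $(Z-z^*)(\hat X^L_T-\hat X^M_T)\leq 0$ is at work, but it yields all monotone convex test functions in one stroke, without the intermediate mean comparison or the case split. Minor loose ends, all repairable as you indicate: ``at most one solution of $\psi(x)=x$'' should be weakened to a sign-change on at most one (possibly degenerate) interval, since $U_L'/U_M'$ is only nondecreasing; and for $c(x)=(x-K)^+$ one must fix a nondecreasing subgradient selection so that the monotonicity step $c'(\hat X^M_T)(\hat X^L_T-\hat X^M_T)\geq c'(x^*)(\hat X^L_T-\hat X^M_T)$ survives at the kink. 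Your treatment of the two degenerate cases (no crossing) is also correct.
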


If the absolute risk aversion of at least one agent is \emph{nonincreasing}\footnote{E.g., this holds for investors with power utility functions $x^{1-p}/(1-p)$, i.e, with constant relative risk aversion $0 < p \neq 1$.} we also obtain the stronger convex order result \eqref{StrongResult}. 

\begin{theorem}\label{thm:convex_order}
  Let $\hat X^\agent{RA}_T$, $\hat X^\agent{RL}_T$ be integrable and suppose that $U_\agent{RA}\mra U_\agent{RL}$. If, in addition, either $U_\agent{RA}$ or $U_\agent{RL}$ has nonincreasing absolute risk aversion, then 
  $$(\hat X^M_T - \E[\hat X^M_T]) \leq_{C}(\hat X^L_T- \E[ \hat X^L_T]).$$
\end{theorem}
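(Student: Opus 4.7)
The plan is to exploit the first-order condition \eqref{eq:duality} to write $\hat X^L_T$ as a deterministic increasing function of $\hat X^M_T$, and then reduce the convex dominance to an analytic single-crossing property whose verification is precisely where the nonincreasing absolute risk aversion hypothesis comes in.

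Letting $Z := d\Q/d\P$ and applying \eqref{eq:duality} to both agents, elimination of $Z$ gives $U_L'(\hat X^L_T) = (y_L/y_M)\,U_M'(\hat X^M_T)$, and hence
$$\hat X^L_T = g(\hat X^M_T), \qquad g(x) := (U_L')^{-1}\!\bigl((y_L/y_M)\,U_M'(x)\bigr).$$
The Inada conditions make $g$ a smooth strictly increasing bijection of $(0,\infty)$ onto itself with $g(0+)=0$. Differentiating the identity $U_L'(g(x))=(y_L/y_M)\,U_M'(x)$ and using $U''=-A\,U'$, where $A_U:=-U''/U'$ denotes absolute risk aversion, gives the compact formula $g'(x)=A_M(x)/A_L(g(x))$. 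Set $\delta := \E[\hat X^L_T]-\E[\hat X^M_T]$, which is nonnegative by Theorem~\ref{thm:monotone_convex_order} applied to $c(x)=x$. Since $\conv$ is translation-invariant, the assertion is equivalent to $\hat X^M_T \conv \hat X^L_T - \delta$, a convex-order comparison of random variables with the same mean. By the Karlin--Novikov cut criterion it then suffices to show that $F_{\hat X^M_T}-F_{\hat X^L_T-\delta}$ changes sign on $\R$ at most once, from negative to positive. Since $g$ is strictly increasing, $F_{\hat X^L_T-\delta}(y)=F_{\hat X^M_T}(g^{-1}(y+\delta))$, so the sign of this difference is governed by the sign of $\phi(y):=g(y)-y-\delta$, and the problem reduces to showing that $\phi$ itself has at most a single sign change, from $-$ to $+$, on $(0,\infty)$.

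The crucial step is to prove that $\phi'(x_0)\ge 0$ at every zero $x_0$ of $\phi$, ruling out any transition from $+$ to $-$. At such an $x_0$ we have $g(x_0)=x_0+\delta$, so $\phi'(x_0)\ge 0$ is equivalent to $A_L(x_0+\delta)\le A_M(x_0)$. If $U_L$ has nonincreasing absolute risk aversion, then $A_L(x_0+\delta)\le A_L(x_0)\le A_M(x_0)$, the second inequality being $U_M \mra U_L$; if instead $U_M$ has nonincreasing absolute risk aversion, then $A_L(x_0+\delta)\le A_M(x_0+\delta)\le A_M(x_0)$, this time invoking $\delta\ge 0$. Combining this observation with $\phi(0+)=-\delta\le 0$ and the mean identity $\E[\phi(\hat X^M_T)]=0$, we conclude that $\phi$ is nonpositive on an initial interval and nonnegative on its complementary terminal interval, which is exactly the sought single-crossing condition.

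The main technical obstacles I anticipate are bookkeeping around tangential zeros of $\phi$ (where $\phi'$ vanishes and $\phi$ may touch zero without crossing) and the case in which the law of $Z$ has atoms, so that the relevant cumulative distribution functions are only right-continuous; both are routine once the sign structure of $\phi$ on intervals has been pinned down.
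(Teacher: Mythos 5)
Your overall strategy---write $\hat X^L_T=g(\hat X^M_T)$ via the first-order condition \eqref{eq:duality}, shift by $\delta=\E[\hat X^L_T]-\E[\hat X^M_T]\geq 0$, and reduce the convex order to a single-crossing statement for $\phi(x)=g(x)-x-\delta$---is essentially the paper's own argument in different clothing: the paper phrases the single crossing as the existence of a point $\tilde q$ with, almost surely, $\tilde q\leq \hat X^M_T\leq \hat X^L_T-l$ or $\tilde q\geq \hat X^M_T\geq \hat X^L_T-l$, and then compares call payoffs by hand instead of invoking the Karlin--Novikov cut criterion. Your reduction, the formula $g'(x)=A_M(x)/A_L(g(x))$, and the two-case chain of inequalities giving $A_M(x_0)\geq A_L(x_0+\delta)$ are all correct.

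The gap is the passage from ``$\phi'(x_0)\geq 0$ at every zero $x_0$ of $\phi$'' to ``$\phi$ changes sign at most once, from negative to positive.'' That implication is false: $\phi(x)=-(x-x_0)^3$ satisfies $\phi'(x_0)=0\geq 0$ at its unique zero, yet crosses from positive to negative there. So the ``tangential zeros'' you defer to routine bookkeeping are not a removable technicality---the local data you extract at a zero is perfectly compatible with a downcrossing, and no amount of local analysis at the zero can exclude it. The repair is to globalize the very inequality you prove: the identical two-case distinction (DARA of $U_L$ plus $U_M\mra U_L$, or $U_M\mra U_L$ plus DARA of $U_M$) shows $A_M(x)\geq A_L(x+\delta)$ for \emph{every} $x>0$, not only at zeros of $\phi$. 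Hence $\frac{d}{dx}\log\bigl(U_L'(x+\delta)/U_M'(x)\bigr)=A_M(x)-A_L(x+\delta)\geq 0$, so $U_L'(x+\delta)/U_M'(x)$ is nondecreasing and the set $\{\phi\geq 0\}=\{x: U_L'(x+\delta)\geq (y_L/y_M)\,U_M'(x)\}$ is an upper interval; combined with $\phi(0+)=-\delta\leq 0$ and $\E[\phi(\hat X^M_T)]=0$ this gives the single crossing, after which the rest of your argument goes through. This monotone-ratio statement is exactly the one the paper uses (``$U_L'(x+l)/U_M'(x)$ is increasing in $x$''), so the fix lands you on the published proof.
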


To simplify what has to be proved we use the following well-known characterization of the (monotone) convex order, which is a straightforward consequence of the monotone convergence theorem.

\begin{lemma}\label{lem:convmon_mon}
Let $X,Y$ be random variables with finite first moments. 
\begin{enumerate}
\item We have $X\leq_{MC} Y$ if and only if $\expval{(X-K)^+}\leq \expval{(Y-K)^+}$ for all $K\in \R$.
\item We have  $X\leq_C Y$ if and only if $X\leq_{MC} Y$ and $\expval{X}=\expval{Y}$.
\end{enumerate}
\end{lemma}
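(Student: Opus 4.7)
The ``only if'' direction of (i) is immediate, since each $x\mapsto (x-K)^+$ is itself monotone increasing and convex; for (ii), the same observation yields $X\convmon Y$, and testing the defining inequality for $\conv$ with $c(x)=\pm x$ forces $\E[X]=\E[Y]$.

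For the ``if'' direction of (i), the plan is to use the standard integral representation of an increasing convex function. Letting $\mu$ be the nonnegative measure on $(0,\infty)$ given (in the distributional sense) by the second derivative of $c$, one has
\[
c(x) = c(0) + c'_+(0)\, x + \int_{(0,\infty)} (x-s)^+\, d\mu(s), \qquad x\geq 0,
\]
with $c'_+(0)\geq 0$ by monotonicity. Since $(X-s)^+\geq 0$ and $\mu\geq 0$, monotone convergence (Tonelli) permits interchanging $\E$ with the integral, and the hypothesis $\E[(X-s)^+]\leq \E[(Y-s)^+]$ for every $s$ (with $s=0$ controlling the affine term since $X,Y\geq 0$) then yields $\E[c(X)]\leq \E[c(Y)]$ term by term.

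For the ``if'' direction of (ii), I first upgrade the hypothesis via put-call parity: the identity $(K-x)^+=(x-K)^+-(x-K)$ combined with $\E[X]=\E[Y]$ yields
\[
\E[(K-X)^+] - \E[(K-Y)^+] \;=\; \E[(X-K)^+] - \E[(Y-K)^+] \;\leq\; 0,\qquad K\in\R.
\]
A general convex $c:\R_+\to\R$ admits, for any choice of $x_0\in\R_+$, the two-sided representation
\[
c(x) = c(x_0) + c'_+(x_0)(x-x_0) + \int_{(x_0,\infty)} (x-s)^+\, d\mu(s) + \int_{[0,x_0]}(s-x)^+\, d\mu(s),
\]
with $\mu\geq 0$ as before. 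Taking expectations via Tonelli and using $\E[X]=\E[Y]$ to cancel the affine piece, both resulting integrals are nonpositive by the ``call'' and ``put'' inequalities above, giving $\E[c(X)]\leq \E[c(Y)]$.

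The only real subtlety is bookkeeping when $\E[c(X)]$ or $\E[c(Y)]$ fails to be finite: the case $\E[c(Y)]=+\infty$ is trivial, and otherwise the $Y$-integrals in the representation are finite and dominate the corresponding $X$-integrals, so the decomposition is well posed. Boundary degeneracies such as $c'_+(0+)=-\infty$ (e.g.\ $c(x)=-\sqrt{x}$) can be absorbed by first approximating $c$ monotonically from below by $c_\eps(x):=c(x+\eps)$ and then letting $\eps\downarrow 0$ via another appeal to monotone convergence.
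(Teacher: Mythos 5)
The paper never actually writes out a proof of this lemma: it is stated as ``well-known'' and ``a straightforward consequence of the monotone convergence theorem'', so there is no in-paper argument to compare against. Your proposal supplies the canonical proof -- hockey sticks and $\pm x$ for the trivial directions, and the representation of an (increasing) convex function as an affine function plus a nonnegative mixture of calls (and, for part (ii), puts), with put--call parity and $\E[X]=\E[Y]$ turning the call inequality into the put inequality -- and it is correct in substance. Two small points should be tightened. First, your representation is anchored at $0$ and tacitly assumes $X,Y\geq 0$; the lemma is also invoked in Section 2.2 for payoffs taking values in all of $\R$, so one should anchor at an arbitrary $x_0$ and handle increasing convex $c$ that are unbounded below by replacing $c$ to the left of $-n$ with its tangent at $-n$ and letting $n\to\infty$; this tangent truncation is exactly where the monotone convergence theorem alluded to in the paper enters (and for real-valued $X$ the bound $\E[X]\leq\E[Y]$ is recovered by letting $K\to-\infty$ in $\E[(X-K)^+]\leq\E[(Y-K)^+]$ rather than by evaluating at $K=0$). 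Second, the claim that $c_\eps(x):=c(x+\eps)$ approximates $c$ ``monotonically from below'' holds only where $c$ is decreasing and fails for, say, $c(x)=(x-1)^2$ on $x>1$; the clean way to absorb $c_+'(0+)=-\infty$ is the same tangent truncation near $0$, i.e.\ replace $c$ on $[0,\delta]$ by its supporting line at $\delta$ and let $\delta\downarrow 0$. Neither issue affects the validity of the argument once these routine modifications are made.
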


\begin{proof}[Proof of Theorem \ref{thm:monotone_convex_order}]
To simplify notation, first notice that we may assume $y_\agent{RA}=y_\agent{RL}=1$. Indeed this is achieved by rescaling $U_M$ and $U_L$ by the factor $y_M$ and $y_L$, respectively, which has no affect on the utility maximization problem and the risk aversion of the utility functions. Setting $D:=d\mathbb{Q}/d\mathbb{P}$, $F:= (U_M')^{-1}$, and  $G:= (U_L')^{-1}$,  the first-order condition \eqref{eq:duality} can be rewritten as 
\begin{align}\label{DRep} \hat X_T^M= F(D), \quad \hat X_T^L= G(D) \end{align}
and $U_\agent{RA}\mra U_\agent{RL}$ implies that $ {F(x)}/{G(x)}$ is decreasing in $x$. 

Next, notice that there exists $q\in\R$ such that, almost surely,
\begin{align}\label{NicePoint}
q\leq \hat X^M\leq \hat X^L \quad \mbox{or}\quad 
q\geq \hat X^M\geq \hat X^L.
\end{align}
To see this consider $\rho:=D(\Q)$. Since the value processes are $\Q$-martingales by \cite[Theorem 2.0]{kramkov.schachermayer.99} and have the same initial value $x$, it follows that
$$\textstyle{\int F\, d\rho=\int F(D)\, d\Q= \int \hat X^M\, d\Q=\int \hat X^L\, d\Q=\int G(D)\, d\Q= \int G\, d\rho}.$$ 
As $F$ and $G$ are continuous this implies that there exists $p>0$ such that $F(p)= G(p)=:q$. Since $F/G$ is decreasing, we obtain for all $x\in \R$ that either $q \leq F(x)\leq G(x)$ or  $q\geq F(x)\geq G(x),$ which yields \eqref{NicePoint}.

As $U_M'$ and $U_L'$ are decreasing, we deduce from \eqref{NicePoint} and $U_M'(\hat X^M)=D=U_L'(\hat X^L)$ that $(D-p)(\hat X^\agent{RL}-\hat X^\agent{RA})\leq 0$. The identity
  \begin{align*}
    0=\mathbb{E}_\mathbb{Q}\big[\hat X^\agent{RL}-\hat X^\agent{RA}\big]={p}\E\big[\hat X^\agent{RL}-\hat X^\agent{RA}\big]+\E\big[(D-{p})(\hat X^\agent{RL}-\hat X^\agent{RA})\big]
  \end{align*}
now yields the intermediate  result $\E[\hat X^M]\leq\E[\hat X^L]$.

It remains to establish that $\E[(\hat X^M-K)^+]\leq \E[(\hat X^L-K)^+]$ for $K\in \R$. If $K\geq q$ this is a trivial consequence of \eqref{NicePoint}.
 
If $K\leq q$, then \eqref{NicePoint} implies $\E[(\hat X^M-K)^-]<\E[(\hat X^L-K)^-]$. Adding the inequality $\E[\hat X^M-K]\leq\E [\hat X^L-K]$ we obtain the desired relation $\E[(\hat X^M-K)^+]\leq \E[(\hat X^L-K)^+]$  also in this case.
\end{proof}

The proof of Theorem \ref{thm:convex_order} follows a similar scheme.

\begin{proof}[Proof of Theorem \ref{thm:convex_order}]
By Lemma \ref{lem:convmon_mon} it suffices to show $\hat X^\agent{RA}\leq_{MC} \hat X^\agent{RL}-l$ where $l:=\E[\hat X^\agent{RL}-\hat X^\agent{RA}]$. 
Note that $l>0$ by the proof of Theorem \ref{thm:monotone_convex_order}. 

Since either $U_\agent{RA}$ or $U_\agent{RL}$ has non increasing risk aversion, $U_\agent{RL}'(x+l)/U_\agent{RA}'(x)$ is increasing in $x$. As above we may assume $y_\agent{RA}=y_\agent{RL}=1$. Setting $F:=(U_\agent{RA}')^{-1}, \tilde G:=(U_\agent{RL})^{-1}+l$ and  $ \tilde\rho:=D(\P)$ we have $\int F\, d\tilde \rho =\int \tilde G\, d\tilde \rho$. Arguing as before, we obtain the existence of a point $\tilde q$ such that, a.s., 
$$\tilde q\leq \hat X^M\leq \hat X^L-l \quad \mbox{or}\quad \tilde q\geq\hat  X^M\geq\hat  X^L-l$$
in analogy to \eqref{NicePoint}.
 
As in the last step of the above proof of Theorem \ref{thm:monotone_convex_order}, this implies that $\E[(\hat X^M-K)^+]\leq \E[((\hat X^L-l)-K)^+]$ for all $K\in\R$.
\end{proof}
\begin{remark}  %We point out that the converse of Theorem \ref{thm:monotone_convex_order} also holds true: 
% DW11, Theorem 4 asserts that if two agents choose, in \emph{every} complete 1-period market,  payoffs $\hat X_T^\agent{RA}$, $\hat X_T^\agent{RL}$ satisfying
%   \begin{equation*}
%     \hat X_T^\agent{RA}\convmon \hat X_T^\agent{RL}
%   \end{equation*}
%  then their corresponding utility functions satisfy $U_\agent{RA}\mra U_\agent{RL}$.
% 
% In particular this assumption is necessary to establish the convex order result in Theorem 1.
  The converse of Theorem \ref{thm:monotone_convex_order} also holds true: If two agents choose -- in \emph{every} complete market --  payoffs $\hat X_T^\agent{RA}$, $\hat X_T^\agent{RL}$ satisfying
  \begin{equation}\label{eq:converse}
    \hat X_T^\agent{RA}\convmon \hat X_T^\agent{RL}
  \end{equation}
 then their corresponding utility functions satisfy $U_\agent{RA}\mra U_\agent{RL}$. This is a direct consequence of \cite[Theorem 4]{dybvig_wang09}, which establishes the above statement under the weaker assumption that \eqref{eq:converse} holds for all complete \emph{one-period} market models.
\end{remark}

\subsection{Utilities defined on the entire real line}

We now turn to investors with utility functions defined on the whole real line. Whereas the final results are analogous, the necessary definitions are technically more involved.

In this setting, we assume that the asset price process $S$ is locally bounded.  A \emph{utility function} then is a strictly increasing, strictly concave, twice differentiable mapping $U: \mathbb{R} \to \mathbb{R} \cup \{\infty\}$ satisfying both the Inada conditions $\lim_{x\to\infty}U'(x)=0$ and $\lim_{x\to -\infty}U'(x)=\infty$ and the condition of reasonable asymptotic elasticity:
$$\limsup_{x \to \infty} \frac{xU'(x)}{U(x)}<1 \quad \mbox{and} \quad \liminf_{x \to -\infty} \frac{xU'(x)}{U(x)}>1.$$

Following \cite{schachermayer.02}, the wealth process $X$ of a self-financing trading strategy starting from an initial endowment $x \in \mathbb{R}$ is called \emph{admissible}, if its utility $U(X_T)$ is integrable, and it is a supermartingale under all absolutely continuous local martingale measures $\mathbb Q$ with ``finite $V$-expectation'', $\expval{V(d\mathbb Q/d\mathbb P)}<\infty$, for the conjugate function $V(y)=\sup_{x \in \mathbb{R}}(U(x)-xy)$, $y>0$, of $U$. Throughout, we suppose that the market admits an equivalent local martingale measure (i.e., satisfies NFLVR) and that for each $y>0$, the dual problem $\inf_{\mathbb{Q}} \expval{V(yd\mathbb Q/d\mathbb P)}$ is finite with a dual minimizer $\hat{\mathbb Q}(y)$ in the set of equivalent local martingale measures. Sufficient conditions for the validity of the latter assumption can be found in \cite{bellini.fritelli.02}; in particular it holds if the market is complete or if the utility function under consideration is exponential, $U(x)=-e^{-\gamma x}$ with $\gamma>0$, and an equivalent local martingale measure $\mathbb Q$ with finite entropy $\expval{d\mathbb{Q}/d\mathbb P\log(d\mathbb Q/d\mathbb P)}<\infty$ exists.

Subject to these assumptions, \cite[Theorem 1]{schachermayer.02} ensures that there is a unique wealth process $\hat X$ that maximizes utility from terminal wealth. Moreover, for a suitable Lagrange multiplier $y$, the latter is once again related to the corresponding dual minimizer $\hat{\mathbb{Q}}(y)$ via the first-order condition 
$$y\frac{d\hat{\mathbb{Q}}(y)}{d\mathbb{P}}=U'(\hat X_T).$$
This was the key property for our proof of Theorems \ref{thm:monotone_convex_order} and \ref{thm:convex_order}. Indeed, an inspection of the proofs shows that we never used that the domains of the utility functions are given by the positive halfline. Hence, we obtain the following analogous results:

\begin{theorem}\label{thm:monotone_convex_order_exp}
Consider two agents with utility functions  $U_\agent{RA}, U_\agent{RL}$ defined on the whole real line and suppose the corresponding optimal terminal payoffs  $\hat X^\agent{RA}_T$, $\hat X^\agent{RL}_T$ are integrable. Then if $U_\agent{RA}\mra U_\agent{RL}$ and the dual minimizers for both agents coincide, we have 
$$\hat X^\agent{RA}_T\convmon \hat X^\agent{RL}_T.$$
If, in addition, either $U_\agent{RA}$ or $U_\agent{RL}$ has nonincreasing absolute risk aversion, then 
$$(\hat X^M_T - \E[\hat X^M_T]) \leq_{C}(\hat X^L_T- \E[ \hat X^L_T]).$$  
\end{theorem}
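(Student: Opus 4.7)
The plan is to observe that the proofs of Theorems \ref{thm:monotone_convex_order} and \ref{thm:convex_order} never exploited positivity of the wealth; they rested only on three ingredients, namely (i) the first-order condition $U'(\hat X_T) = y\,d\hat{\mathbb{Q}}(y)/d\P$ expressing the optimal payoff as a decreasing function of a common density $D$, (ii) the fact that $\hat X^\agent{RA}$ and $\hat X^\agent{RL}$ are \emph{true} martingales with the same initial capital $x$ under a common measure $\hat{\mathbb{Q}}$, and (iii) continuity of the inverses $F := (U'_\agent{RA})^{-1}$ and $G := (U'_\agent{RL})^{-1}$. The hypothesis of a shared dual minimizer $\hat{\mathbb{Q}}$ is exactly the common-measure requirement in (ii), and \cite[Theorem 1]{schachermayer.02} supplies both (i) and the martingale property under reasonable asymptotic elasticity, while (iii) is immediate from strict concavity and differentiability. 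So the plan is simply to record that (i)--(iii) remain available in this setting and then replay the previous arguments verbatim.

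Concretely, after rescaling to normalize $y_\agent{RA}=y_\agent{RL}=1$, set $D:=d\hat{\mathbb{Q}}/d\P$ and $\rho:=D(\hat{\mathbb{Q}})$. The common $\hat{\mathbb{Q}}$-martingale property gives
\[
\textstyle\int F\,d\rho \;=\; \E_{\hat{\mathbb{Q}}}[\hat X^\agent{RA}_T] \;=\; x \;=\; \E_{\hat{\mathbb{Q}}}[\hat X^\agent{RL}_T] \;=\; \int G\,d\rho,
\]
so by continuity there is a crossing point $p>0$ with $F(p)=G(p)=:q$. Since $U_\agent{RA}\mra U_\agent{RL}$ still forces $F/G$ to be monotone decreasing, \eqref{NicePoint} holds on the support of $D$. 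From here the proof of Theorem \ref{thm:monotone_convex_order} copies: the identity $0=\E_{\hat{\mathbb{Q}}}[\hat X^\agent{RL}_T-\hat X^\agent{RA}_T] = p\,\E[\hat X^\agent{RL}_T-\hat X^\agent{RA}_T]+\E[(D-p)(\hat X^\agent{RL}_T-\hat X^\agent{RA}_T)]$ yields $\E[\hat X^\agent{RA}_T]\leq\E[\hat X^\agent{RL}_T]$, and the case split $K\geq q$ versus $K\leq q$ upgrades this to $\E[(\hat X^\agent{RA}_T-K)^+]\leq \E[(\hat X^\agent{RL}_T-K)^+]$ for every $K\in\R$, which via Lemma \ref{lem:convmon_mon}(i) is the monotone convex order. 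For the convex-order assertion, the proof of Theorem \ref{thm:convex_order} applies identically after replacing $G$ by $\tilde G := G + l$ with $l:=\E[\hat X^\agent{RL}_T-\hat X^\agent{RA}_T]>0$ and appealing to Lemma \ref{lem:convmon_mon}(ii).

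The main obstacle is verifying ingredient (ii): in the real-line framework admissibility is phrased as a supermartingale condition under every ELMM with finite $V$-expectation rather than through $X\geq 0$, so one needs the \emph{martingale} --- not merely supermartingale --- property of the optimizer under the shared dual minimizer in order to extract the equality $\E_{\hat{\mathbb{Q}}}[\hat X^\agent{RA}_T] = x = \E_{\hat{\mathbb{Q}}}[\hat X^\agent{RL}_T]$. This is precisely what \cite[Theorem 1]{schachermayer.02} delivers under the reasonable-asymptotic-elasticity hypothesis; once it is in hand, the remainder of the proof is a mechanical rerun of the halfline arguments.
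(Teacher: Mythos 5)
Your proposal is correct and follows essentially the same route as the paper, which likewise justifies Theorem \ref{thm:monotone_convex_order_exp} by observing that the proofs of Theorems \ref{thm:monotone_convex_order} and \ref{thm:convex_order} never used positivity of the domain and carry over verbatim once the first-order condition and the $\hat{\mathbb{Q}}$-martingale property of the optimizer (both supplied by \cite[Theorem 1]{schachermayer.02}) are in place. Your explicit isolation of the three ingredients, and in particular of the true-martingale (rather than supermartingale) property needed to get $\E_{\hat{\mathbb{Q}}}[\hat X^\agent{RA}_T]=x=\E_{\hat{\mathbb{Q}}}[\hat X^\agent{RL}_T]$, is a more careful spelling-out of the paper's one-sentence argument, not a different one.
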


Since the so-called \emph{minimal entropy martingale measure} is the dual minimizer for all exponential utility maximizers -- irrespective of risk aversion and initial endowment -- it follows that the above result is always applicable in this case. That is, no extra assumptions other than the integrability of the agents optimal payoffs need to be imposed on the financial market.

\begin{corollary}
Consider two agents with exponential utilities $-e^{-\gamma_\agent{RA}x}$ resp.\ $-e^{-\gamma_\agent{RL}x}$. Then if $\gamma_\agent{RA} > \gamma_\agent{RL}$, the assumptions of both parts of Theorem \ref{thm:monotone_convex_order_exp} are always satisfied, provided that the agents optimal payoffs are integrable.
\end{corollary}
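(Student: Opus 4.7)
The plan is to reduce the corollary to Theorem~\ref{thm:monotone_convex_order_exp} by verifying each of its hypotheses for the two exponential investors. Most of the checks are routine: the functions $U(x)=-e^{-\gamma x}$ are strictly increasing, strictly concave, twice differentiable, satisfy the Inada conditions $U'(-\infty)=\infty$, $U'(\infty)=0$, and have reasonable asymptotic elasticity since $xU'(x)/U(x)=-\gamma x$ tends to $-\infty$ as $x\to\infty$ and to $+\infty$ as $x\to-\infty$. The absolute risk aversion of $-e^{-\gamma x}$ is the constant $\gamma$, so $\gamma_\agent{RA}>\gamma_\agent{RL}$ immediately yields $U_\agent{RA}\mra U_\agent{RL}$, and the ``nonincreasing absolute risk aversion'' hypothesis needed for the second assertion is trivially met (the risk aversion is constant).

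The non-trivial point, and the main obstacle, is to show that the dual minimizer does not depend on the risk aversion parameter $\gamma$ nor on the Lagrange multiplier $y$, so that both exponential investors share the same $\hat{\mathbb Q}$. First I would compute the conjugate of $U(x)=-e^{-\gamma x}$: a direct calculation gives
\begin{equation*}
V(y)=\sup_{x\in\R}(-e^{-\gamma x}-xy)=\frac{y}{\gamma}\left(\log\frac{y}{\gamma}-1\right),\qquad y>0.
\end{equation*}
Then, writing $Z=d\mathbb Q/d\mathbb P$ for an arbitrary absolutely continuous local martingale measure $\mathbb Q$, one has $\E[Z]=1$ and
\begin{equation*}
\E\!\left[V(yZ)\right]=\frac{y}{\gamma}\E[Z\log Z]+\frac{y}{\gamma}\left(\log\frac{y}{\gamma}-1\right).
\end{equation*}
The term in brackets is a constant depending only on $y$ and $\gamma$, so minimizing $\mathbb Q\mapsto \E[V(yZ)]$ is equivalent to minimizing the relative entropy $\E[Z\log Z]$. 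Hence, irrespective of $\gamma$ and $y$, the dual minimizer is the minimal entropy martingale measure $\mathbb{Q}^E$, which is assumed to exist and to be an equivalent local martingale measure under the standing hypothesis that the agents' optimal payoffs are integrable (cf.\ the discussion preceding Theorem~\ref{thm:monotone_convex_order_exp} and the existence results in \cite{bellini.fritelli.02}).

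Having identified $\hat{\mathbb Q}(y_\agent{RA})=\hat{\mathbb Q}(y_\agent{RL})=\mathbb Q^E$, all hypotheses of Theorem~\ref{thm:monotone_convex_order_exp} are satisfied, and both conclusions of the corollary follow directly. I would phrase the argument so that the common-dual-minimizer step is the single substantive computation, with the remaining verifications relegated to a one-line check of Inada, asymptotic elasticity, and the constancy of $-U''/U'$.
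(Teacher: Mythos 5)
Your proof is correct, but it reaches the key conclusion---that all exponential investors share the same dual minimizer---by a genuinely different route than the paper. You work on the dual side: you compute the conjugate $V(y)=\frac{y}{\gamma}\bigl(\log\frac{y}{\gamma}-1\bigr)$ and observe that $\E[V(yZ)]=\frac{y}{\gamma}\E[Z\log Z]+\text{const}(y,\gamma)$, so that minimizing over $\mathbb Q$ is minimizing relative entropy irrespective of $y$ and $\gamma$; this is the standard argument identifying the minimal entropy martingale measure as the dual optimizer for exponential utility, and it makes the claim essentially a one-line computation. The paper instead argues on the primal side: it notes that admissibility for $U(x)=-e^{-\gamma x}$ is independent of the initial endowment and scale invariant, so the optimal strategy is inversely proportional to $\gamma$, and then reads off from the first-order condition $y\,d\hat{\mathbb Q}(y)/d\mathbb P=U'(\hat X_T)$ that the dual minimizer cannot depend on $\gamma$. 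Your dual computation is arguably more self-contained and explicit, whereas the paper's argument avoids computing $V$ at all but leans on the scaling structure of the primal problem. One small bookkeeping remark: the existence of the dual minimizer as an equivalent local martingale measure is part of the section's standing assumptions (guaranteed for exponential utility by the existence of an equivalent local martingale measure with finite entropy), not a consequence of the integrability of the optimal payoffs; the latter is only needed for the convex order conclusions themselves. This does not affect the validity of your argument.
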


\begin{proof}
First notice that exponential utilities have constant and therefore nonincreasing absolute risk aversion. Next note that the notion of admissibility is both independent of the initial endowment and scale invariant for $U(x)=-e^{-\gamma x}$. Hence the optimal strategy is evidently independent of the initial endowment and inversely proportional to the absolute risk aversion $\gamma$. Since the Lagrange multiplier $y$ is given by the marginal indirect utility (cf., e.g., \cite[Theorem 1]{schachermayer.02}), it then follows from the first-order condition that the dual minimizer is the same for all absolute risk aversions $\gamma$.
\end{proof}

\section{Structural Counterexample in Incomplete Markets}

In this section we show that -- even in finite probability spaces and for investors with power utility functions --  Theorems \ref{thm:monotone_convex_order} and \ref{thm:convex_order} are ``fragile'', in that they do not hold in \emph{any} market model if the latter is perturbed by adding a single extra state with arbitrary small probability. 
   
  \subsection{Basic Idea} 
  
Our starting point is the simple observation that the monotone convex order between random variables $X$, $Y$ can be destroyed by minimal perturbations of the distributions of $X$, $Y$, see Figure \ref{fig:convex_order_counterexample}.
  
To enforce that $\hat X_T^M\not\leq_{MC}\hat X_T^L$ it is
sufficient\footnote{Indeed, if $\hat X_T^M\not\leq_{MC}\hat X_T^L$ this is
always witnessed by a hockeystick function, cf.\ Lemma
\ref{lem:convmon_mon}.} to find some number $K^*\in\R$ such that
$\E[(\hat X_T^M-K^*)^+]>\E[(\hat X_T^M-K^*)^+]$.
That is, in order to construct a model for which the montone convex order
\eqref{WeakResult} fails, we want to assure that the more risk averse
agent \agent{RA} attains with a higher probability than \agent{RL}
large values above a certain threshold $K^*$.  In particular, in
finite probability spaces $\max_\omega
\hat X_T^\agent{RA}(\omega)>\max_\omega \hat X_T^\agent{RL}(\omega)$ already
assures that the montone convex order fails; this follows by
considering the call with $K^*= \max_\omega \hat X_T^\agent{RL}$.
 
  \begin{figure}[t]
    \centering
    \subfigure[montone convex order]{
      \includegraphics[width=0.45\textwidth]{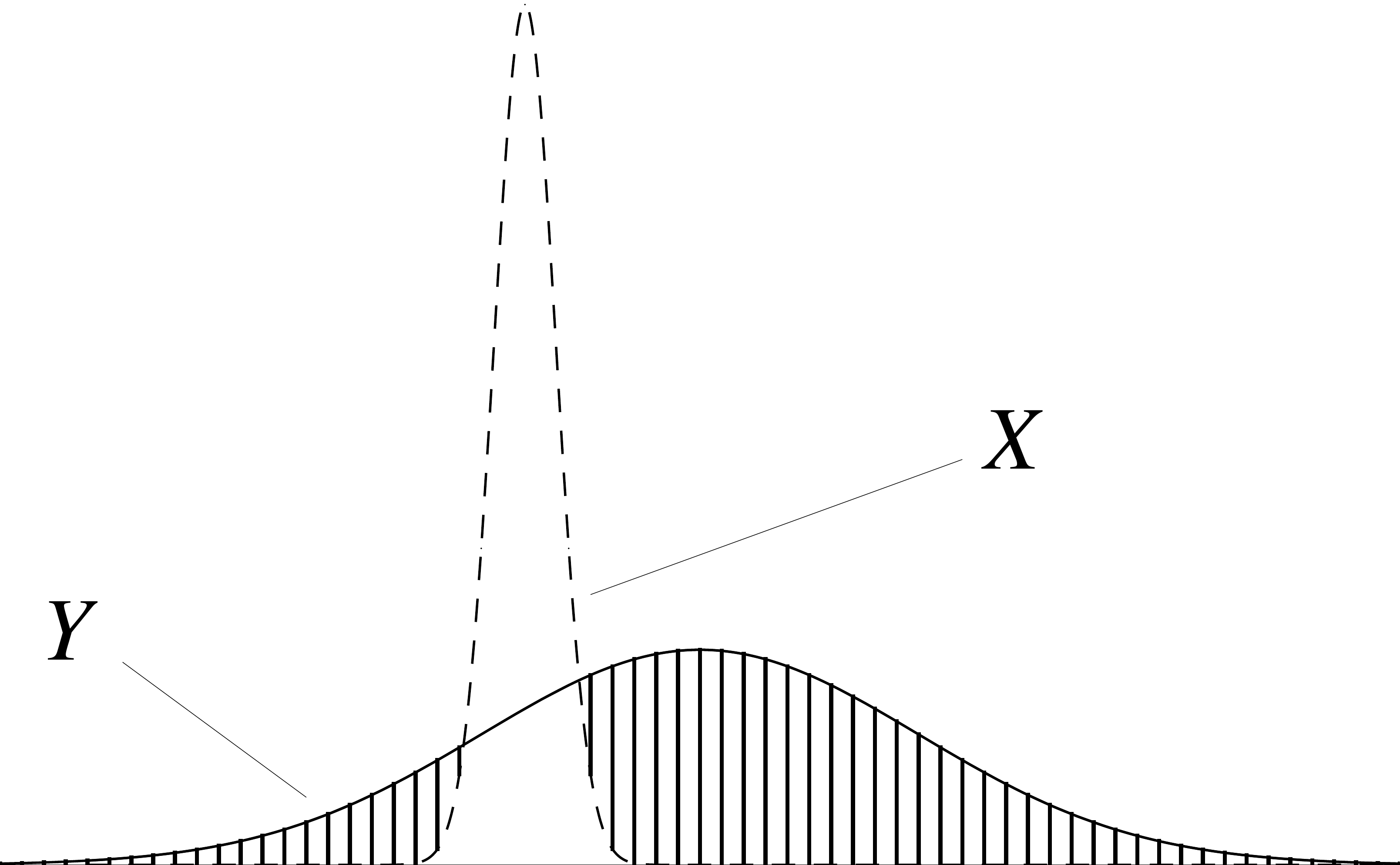}
      \label{fig:convex_order}
    }
    \subfigure[no monotone convex order]{
      \includegraphics[width=0.45\textwidth]{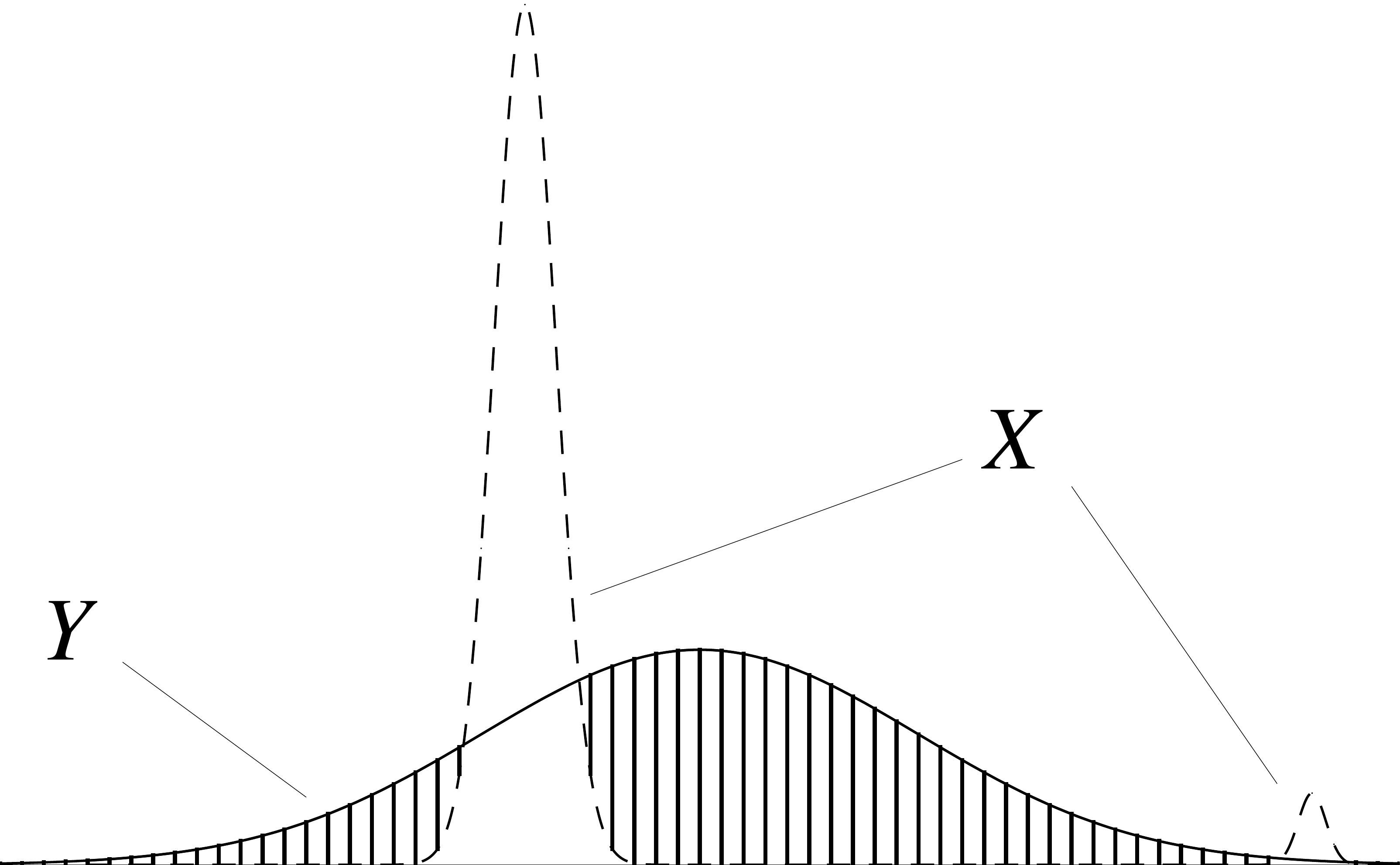}
      \label{fig:counterexample}
    }
      \caption{Both illustriations show the distributions of random variables $X$ and $Y$. \subref{fig:convex_order}: $X\convmon Y$; \subref{fig:counterexample}: $X\nleq_{MC}Y$}                
    \label{fig:convex_order_counterexample}
  \end{figure}

  \subsection{Concrete Counterexample}
  We now present an explicit example for one riskless and one risky asset, showing that lower risk aversion does in general not lead to a larger portfolio in the monotone convex order. Moreover, our example exemplifies that this can happen even if the less risk averse investor always invest a larger fraction of her wealth in the risky asset. The latter is the decisive property used by DW for the proof of their results in incomplete one-period models.
 
  We start with a complete two-period market model for which the monotone convex order \eqref{WeakResult} holds true by Theorem \ref{thm:monotone_convex_order} above. We then alter this model by inserting a new branch after the first period which makes the model incomplete. The new branch occurs with an arbitrarily small probability $\epsilon$ so that  the optimal strategies in the new model are almost identical to the original strategies. However, this new branch is constructed in such a way that the more risk averse agent \agent{RA} attains with positive probability a payoff that is larger than any possible payoff of \agent{RL}, which implies $\hat X_T^\agent{RA}\nleq_{CM}\hat X_T^\agent{RL}$.

  For simplicity we consider a complete binomial  model  for which the stock does not change in the second period but stays constant. 
  
  \begin{center}
    \begin{tikzpicture}[>=stealth,sloped]
      \matrix (tree) [%
	matrix of nodes,
	minimum size=1cm,
	column sep=1.5cm,
	row sep=0.0cm,
      ]
      {
		& $S_1=2$ & $S_2=2$ \\
	$S_0=1$	&  &  \\
		&  &   \\
		&  &  \\
		& $S_1=0.5$  & $S_2=0.5$\\ 	
      };
      \draw[->] (tree-2-1) -- (tree-1-2) node [midway,above] {$0.6$};
      \draw[->] (tree-2-1) -- (tree-5-2) node [midway,below] {$0.4$};
      \draw[->] (tree-1-2) -- (tree-1-3) node [midway,above] {$1$};     
      \draw[->] (tree-5-2) -- (tree-5-3) node [midway,below] {$1$};
    \end{tikzpicture}
  \end{center}
  The preferences of agents \agent{RA} and \agent{RL} are given by power utilities with relative risk aversions $p_\agent{RA}=0.9$ and $p_\agent{RL}=0.3$, respectively.\footnote{Indeed any other choice of $p_i>0$ can be made to work as well.} By direct computation, we find that the agents optimally invest fractions $\hat{\pi}^\agent{RA}_0\approx0.887$ resp.\ $\hat{\pi}^\agent{RL}_0\approx 1.853$ of their wealth in the risky asset at time $t=0$. In particular, since $\agent{RL}$ invests a larger fraction of wealth than \agent{RA}, agent \agent{RL} has less money than \agent{RA} when the price of the risky asset  decreases.
  
   The branch that we want to insert into $S$ after the first period takes advantage of this disparity of wealth between \agent{RA} and \agent{RL}. It is given by $S^*$, 
 \begin{center}  
  \begin{tikzpicture}[>=stealth,sloped]
      \matrix (tree) [%
	matrix of nodes,
	minimum size=1cm,
	column sep=1.5cm,
	row sep=0.0cm,
      ]
      {
			  &  \\
			  & $S_1^*=0.5K$ \\
	$S_0^*=0.5$ &   \\
		& $S_1^*=0.25$ \\
		&   \\		
      };
      \draw[->] (tree-3-1) -- (tree-2-2) node [midway,above] {$1-\alpha$};
      \draw[->] (tree-3-1) -- (tree-4-2) node [midway,below] {$\alpha$};
  \end{tikzpicture}
 \end{center}
 where we fix $\alpha$ to be a small probability, say  $\alpha= 0.05$ and $K$ a large stock value, for instance $K=20$.
 Since this market offers both agents a very high probability of big fortune, \agent{RA} and \agent{RL} invest almost as much as admissibility allows and choose $\hat\pi^{\agent{RA}*}\approx1.9492$, respectively $\hat\pi^{\agent{RL}*}\approx 1.9999$ as their optimal fraction of wealth invested in $S^*$.
  
  We now ``perturbate'' $S$ and define a new process $S'$ by including $S^*$ into $S$ after the first step, i.e., we fix a small probability  $\epsilon=0.01$ and define $S'$ by
  \begin{center}  
    \begin{tikzpicture}[>=stealth,sloped]
	\matrix (tree) [%
	  matrix of nodes,
	  minimum size=1cm,
	  column sep=2cm,
	  row sep=0.0cm,
	]
	{
		  & $S'_1=2$ & $S'_2=2$ \\
	  $S'_0=1$	&  &  $S'_2=0.5K$\\
		  & $S'_1=0.5$ &   \\
		  &  &  $S'_2=0.25$\\
		  & $S'_1=0.5$  & $S'_2=0.5$\\ 	
	};
	\draw[->] (tree-2-1) -- (tree-1-2) node [midway,above] {$\approx0.6$};
	\draw[->] (tree-2-1) -- (tree-5-2) node [midway,below] {$\approx0.4$};
	\draw[->] (tree-2-1) -- (tree-3-2) node [midway,below] {$\epsilon$};
	\draw[->] (tree-1-2) -- (tree-1-3) node [midway,above] {$1$};     
	\draw[->] (tree-5-2) -- (tree-5-3) node [midway,below] {$1$};
	\draw[->] (tree-3-2) -- (tree-2-3) node [midway,below] {$1-\alpha$};
	\draw[->] (tree-3-2) -- (tree-4-3) node [midway,below] {$\alpha$};
    \end{tikzpicture}
  \end{center}
  By the dynamic programing principle, the optimal trading strategies $\hat\pi_1'$ chosen at $t=1$ for the new model $S'$ are given by $\hat\pi^{\agent{RA}*}$, resp.\ $\hat \pi^{\agent{RL}*}$ above.\footnote{In a general two period model the trading strategy chosen at $t=1$ would depend on the current state of the first period. Since $S'_2$ only changes in the branch given by $S^*$ we can neglect this dependence.} Since $\epsilon$ is small, the optimal trading strategies $\hat\pi'_0$ chosen at $t=0$ are close to $\hat\pi^\agent{RA}_0$, resp.\ $\hat\pi^\agent{RL}_0$ above, and can be numerically computed to be given by $\hat\pi^{\agent{RA}'}_1\approx0.8595$, resp.\ $\hat\pi^{\agent{RL}'}_1\approx1.6622$.

  We thus see that \agent{RL} invests in the second step a larger fraction of wealth in the stock than \agent{RA}. But since \agent{RA}'s wealth after the first period is larger when $S_1'=0.5$, \agent{RA} invests more \emph{money} in $S'$ than \agent{RL}. In particular we find that the optimal terminal payoffs $\hat X_2^M, \hat X_2^L$ satisfy $21.6897 \approx \max \hat X^\agent{RA}_2(\omega)>\max_\omega \hat X^\agent{RL}(\omega) \approx 6.5873$, where the maxima are attained at the event $S'_2=0.5K$. Hence, $\hat X^\agent{RA}_2\nleq_{MC}\hat  X^\agent{RL}_2$.

\subsection{Counterexample in an $n$-period model}
In the previous section we have seen in a concrete example that small changes of the model can cause the failure ofthe convex order relationship  \eqref{WeakResult}. Indeed this applies in a much wider setting; here we want to illustrate this in the case of an (arbitrage free) $n$-period  model $(S_i)_{i=0}^n$ (where $n\geq 2$), defined on a finite probability space.     

Consider, once again,  agents $M,L$  equipped with power utility functions with parameters $p_\agent{RL}=0.3$ resp.\ $p_\agent{RA}=0.9$. Assume for simplicity that the stock price does not stay constant during any period. In this case the investors face a strictly concave optimization problem, hence, the optimal strategies $(\hat\pi_i^M)_{i=1}^n$, resp.\ $(\hat\pi_i^L)_{i=1}^n$ are uniquely determined. Denote by $\hat X^M$ resp.\ $\hat X^L$ the resulting optimal wealth processes. 

We make the further assumption that the respective optimal wealth processes are not equal, more precisely that $\hat X^M_{n-1}$ is not equal to  $\hat X^L_{n-1}$. 

Fix an arbitrary small number $\eta>0$. 
Then it is possible to replace the process $S$ by a new process $ S'$ which  agrees with $S$ during the first $n-1$ stages and differs from $S$ only in the last stage and with probability less than $\eta$, but for which \eqref{WeakResult} fails. 

Using the assumption that  $\hat X^M_{n-1}\neq \hat X^L_{n-1}$ we find that there exist
$a,b,c\in \R$, $a>b$ such that the  event  $A=\{\hat X^M_{n-1}=a, \hat X^L_{n-1}=b, S_{n-1}=c\}$ has positive probability.  

We now introduce a coin flip $\theta$, independent of the stock price model and so that the  outcome is $\{\theta=\mbox{head}\}$ with probabilty $\eps<\eta$ and $\{\theta=\mbox{tail}\}$ with probability $1-\epsilon$.
If the coin shows tail then the  stock price process remains unchanged, i.e.\ $S'=S$. But in the event $A\cap\{\theta =\mbox{head}\}$, the stock price process in the last period is, as above, replaced by $S^*$ given through 
 \begin{center}  
  \begin{tikzpicture}[>=stealth,sloped]
      \matrix (tree) [%
	matrix of nodes,
	minimum size=1cm,
	column sep=1.5cm,
	row sep=0.0cm,
      ]
      {
			  &  \\
			  & $S_{n}^*=cK$ \\
	$S_{n-1}^*=c$ &   \\
		& $S_n^*= \frac{c}{2}$ \\
		&   \\		
      };
      \draw[->] (tree-3-1) -- (tree-2-2) node [midway,above] {$1-\alpha$};
      \draw[->] (tree-3-1) -- (tree-4-2) node [midway,below] {$\alpha$};
  \end{tikzpicture}
 \end{center}
where $K>1$. An elementary analysis of the above example reveals that for $\alpha$ sufficiently close to $1$ both agents will invest almost as much in the stock as admissibility allows. As $a>b$ we can arrange the constant $K$ large enough so that the maximal payoff of agent $M$ supersedes that of agent $L$ as well as the maximum of $\hat X_n^L$.

There remains one issue to be dealt with: due to the change in the model we are now facing new optimal strategies and value processes. To cope with this problem we observe that the orginal problem was only changed on the portion $A\cap\{\theta =\mbox{head}\}$ of our space which has probability at most $\eps$ and that the possible gains on this set are bounded by some constant independent of $\eps$. Consequently the perturbation of the original model vanishes as $\eps\to 0$. As the original maximization problems had unique solutions, the new optimal strategies resemble the original ones as closely as we want (with the notable exception of the event $A\cap\{\theta =\mbox{head}\}$, in period $n-1$). 

Summing up, upon choosing $\eps>0$ sufficiently small we obtain for the new optimal terminal wealth $\max_\omega \hat X_n^\agent{RA}(\omega)>\max_\omega\hat X_n^\agent{RL}(\omega)$ and in particular that \eqref{WeakResult} fails. 

We conclude this section by pointing out that our argument is still valid if the stock is allowed to stay constant. Indeed, in this case the uniqueness of the optimal trading strategies is only violated in periods for which the stock does not change, but this does not affect the above reasoning. 

\section{Models with (Conditionally) Independent Returns}
 
 In this last section, we consider some particular incomplete markets in continuous time, where the results of DW do hold. 
 
 More specifically, we focus on power utility investors in models of one riskless and one risky asset with independent or, more generally, conditionally independent returns. Its price process is first assumed to be modeled as the stochastic exponential $S=\mathcal{E}(R)$ of a \emph{L\'evy process} $R$, i.e, $dS_t/S_t=dR_t$. By its very definition, the L\'evy process $R$ has independent (and in fact also stationary) increments. Here, it can be interpreted as the returns process that generates the price process $S$ of the risky asset in a multiplicative way.
 
Concerning preferences, we focus on investors with power utilities, i.e., $U(x)=x^{1-p}/(1-p)$, where $0<p\neq 1$ denotes the investor's constant relative risk aversion. In this case, trading strategies are most conveniently parametrized in terms of the \emph{fractions} $\pi_t$ of wealth invested in the risky asset at time $t \in [0,T]$ (cf., e.g., \cite{nutz09power} for a careful exposition of this matter). The wealth process corresponding to the risky fraction process $(\pi_t)_{t \in [0,T]}$ is then given by $dX_t/X_t=\pi_tdR_t$, i.e., $X_t=x\mathcal{E}(\int_0^\cdot \pi_s dR_s)_t$.

In this setting, it has been proved -- by \cite{samuelson69lifetime} in discrete time and, in increasing degree of generality, by \cite{framstad.al.99, benth_karlsen01optimal, kallsen00optimal,nutz09power} in continuous time -- that the optimal policy is to invest a \emph{constant} fraction $\hat{\pi}$ in the risky asset. The latter is known implicitly as the maximizer of some deterministic function, see \cite{nutz09power}. In addition, it is possible to obtain some comparative statics for the optimal risky fractions here. More specifically, for two power utility functions $U_\agent{RA}\mra U_\agent{RL}$,\footnote{I.e., the relative risk aversion $p^M$ of $M$ is larger than its counterpart $p^L$ for $L$.} the optimal risky fractions $\hat\pi_\agent{RA}$, $\hat\pi_\agent{RL}$ satisfy $|\hat\pi_\agent{RA}|\leq|\hat\pi_\agent{RL}|$ and are non-negative (non-positive) if $\expval{R_t}$ is non-negative (non-positive) for some (or equivalently all) $t$, cf.\ \cite[Proposition 4.4]{temme11notes}. 
 
The main result of this section is stated in the following theorem. 

   \begin{theorem} \label{thm:conv_stoch_dom_levy}
   Suppose $R$ is square-integrable and neither a.s.\ increasing nor a.s.\ decreasing, and $S=\mathcal{E}(R)$ is strictly positive. Then for power utility functions $U_\agent{RA}\mra U_\agent{RL}$ the optimal payoffs $\hat X^\agent{RA}$, $\hat X^\agent{RL}$satisfy $$\hat X^\agent{RA}_T-\E[\hat X^\agent{RA}_T]\conv \hat X^\agent{RL}_T-\E[\hat X^\agent{RA}_T].$$
  \end{theorem}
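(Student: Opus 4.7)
\emph{Strategy and discretization.} I follow the two-step program announced by the authors: prove the convex order in a discrete-time i.i.d.\ approximation of $R$ by induction on the number of periods, then pass to the continuous-time limit. For each $n\in\BbbN$ I sample $R$ at the times $kT/n$ to obtain an $n$-step random walk with i.i.d.\ increments $(\Delta R_k)_{k=1}^n$. Samuelson's classical result shows that a power-utility investor's optimum in such a market is the constant risky fraction $\hat\pi^{(n)}_i$ used every period, so $\hat X^{i,(n)}_T = x\prod_{k=1}^n(1+\hat\pi^{(n)}_i \Delta R_k)$; the comparative statics of \cite[Proposition 4.4]{temme11notes} apply verbatim and give $|\hat\pi^{(n)}_M|\leq|\hat\pi^{(n)}_L|$ with matching signs.

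\emph{Base case and induction.} The case $n=1$ is a one-period i.i.d.\ market and is handled by \cite[Theorem~7]{dybvig_wang09}, applicable because power utility has nonincreasing absolute risk aversion. For the induction step write $\xi_k^i := 1+\hat\pi^{(n)}_i\Delta R_k$, $W_k^i := x\prod_{j\leq k}\xi_j^i$, $E_k^i := \E[W_k^i]$, $\bar W_k^i := W_k^i - E_k^i$, and $\bar\xi_k^i := \xi_k^i - (1+\hat\pi^{(n)}_i\mu)$ with $\mu := \E[\Delta R_1]$; independence of $\Delta R_n$ from the past yields the decomposition
\begin{equation*}
\bar W_n^i \;=\; \xi_n^i\,\bar W_{n-1}^i \;+\; E_{n-1}^i\,\bar\xi_n^i.
\end{equation*}
I would combine the inductive hypothesis $\bar W_{n-1}^M\conv\bar W_{n-1}^L$ with the single-period inequality $\bar\xi_n^M\conv\bar\xi_n^L$ through Strassen couplings on a joint probability space, exploiting that $\bar W_{n-1}^i$ is independent of $\Delta R_n$ and that $E_{n-1}^M\leq E_{n-1}^L$, to build a Strassen witness for $\bar W_n^M\conv\bar W_n^L$.

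\emph{Passage to the limit.} Square-integrability of $R$ together with continuity of stochastic exponentials in their integrands and of the power-utility optimizer (cf.\ \cite{nutz09power}) yields $\hat X^{i,(n)}_T\to\hat X^i_T$ in $L^1$; since the convex order is closed under $L^1$-convergence, the claim for $R$ follows.

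\emph{Main obstacle.} The inductive step is where the real work lies. Although $\bar W_{n-1}^i$ and $\Delta R_n$ are independent, the two summands $\xi_n^i\bar W_{n-1}^i$ and $E_{n-1}^i\bar\xi_n^i$ in the decomposition above both contain $\Delta R_n$, and the weight $\xi_n^i$ is ordered opposite to $|\hat\pi^{(n)}_i|$ on the event $\{\Delta R_n<0\}$, so a fibrewise conditioning on $\Delta R_n$ cannot give the order per outcome. The coupling has to balance globally the larger magnitude of $L$'s noise against her correspondingly larger mean shift $E_{n-1}^L\bar\xi_n^L$, presumably via an auxiliary lemma on convex-ordered products of independent zero-mean random variables.
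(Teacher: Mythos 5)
There is a genuine gap, and it sits exactly where the paper warns it would. Your discretization re-solves the utility maximization problem in the sampled $n$-period market and takes the \emph{discrete-time} optimizers $\hat\pi^{(n)}_i$; but in discrete time admissibility forces $1+\hat\pi^{(n)}_i\Delta R_k>0$ a.s., which constrains $\hat\pi^{(n)}_i$ (typically to $[0,1]$ when the increments are unbounded in both directions), whereas the continuous-time optimizer $\hat\pi_i$ may involve leverage or short sales. In that case $\hat\pi^{(n)}_i$ does not approach $\hat\pi_i$ and $\hat X^{i,(n)}_T\not\to\hat X^i_T$; the paper explicitly flags this failure ("the continuous-time optimizer can lead to bankruptcy if applied in discrete time"). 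The fix the paper uses is to \emph{not} re-optimize: it fixes the continuous-time fractions $\hat\pi_\agent{RA},\hat\pi_\agent{RL}$ (known to be constant and ordered in absolute value by \cite[Proposition 4.4]{temme11notes}) and compares the purely analytic Euler schemes $\prod_{k}\bigl(1+\hat\pi_i(\Delta_k^N R-bT/N)\bigr)$ of the SDEs driven by the compensated process $\bar R_t=R_t-bt$, which do converge in $L^1$ to $\mathcal{E}(\hat\pi_i\bar R)_T$; an elementary scaling lemma then converts back to $\mathcal{E}(\hat\pi_i R)_T$ minus its mean. Your appeal to Samuelson and to ``continuity of the optimizer'' does not repair this.

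The second gap is that your induction step is not actually carried out: the decomposition $\bar W_n^i=\xi_n^i\bar W_{n-1}^i+E_{n-1}^i\bar\xi_n^i$ is correct algebra, but the Strassen-coupling argument you gesture at to combine the two $\Delta R_n$-dependent summands is precisely the hard part, and you acknowledge you do not know how to close it. The paper avoids the problem entirely by centering the \emph{increments} rather than the wealth: each factor $1+\hat\pi_i(\Delta R_k-\E[\Delta R_k])$ has mean one, so the centered product is again a product of independent factors, and the induction reduces to two one-line lemmas, namely that $X\conv Y$ with $Z$ independent implies $XZ\conv YZ$, and that $X\conv aX-(a-1)\E[X]$ for $a\geq1$. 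No coupling and no appeal to Dybvig--Wang's one-period theorem is needed. If you want to salvage your route, replace the re-optimized discrete markets by the fixed-fraction Euler scheme for $\bar R$ and prove the product inequality with these two lemmas; as written, both the inductive step and the limit passage are open.
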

  
  Since the optimal fractions are at least known implicitly as the maximizers of a scalar function and, in particular, satisfy $|\hat\pi_\agent{RA}|\leq|\hat\pi_\agent{RL}|$, one might think that this result can be obtained by a direct comparison of the corresponding wealth processes $\hat X^\agent{RA}_T= \mathcal{E}(\hat{\pi}_\agent{RA}R)_T$ and $\hat X^\agent{RL}_T=\mathcal{E}(\hat{\pi}_\agent{RL}R)_T$. However, the dependence of these random variables on the risky fractions is quite involved as can be seen by looking at the explicit formula for the stochastic exponential \cite[Theorem I.4.61]{jacod_shiryaev03limit}.   
  
 In a discrete-time setting, Theorem \ref{thm:conv_stoch_dom_levy} can be established by induction using the results of Kihlstrom, Romer, and Williams \cite{kihlstrom.al.81} and the scaling properties of the power utilities. However, the corresponding result in continuous time cannot generally be obtained by passing to the limit since the continous-time optimizer can lead to bancruptcy if applied in discrete time, if it involves shortselling or leveraging the risky asset.
 
  In order to prove Theorem \ref{thm:conv_stoch_dom_levy}, we therefore follow a different route. We first prove by induction the intermediate Proposition \ref{prop:conv_stoch_dom_euler}, which shows that the stochastic order holds true for discrete-time Euler approximations of $\hat X^\agent{RA}=\stochexp{\hat\pi_\agent{RA} R}$ and $\hat X^\agent{RL}=\stochexp{\hat\pi_\agent{RL} R}$. Theorem \ref{thm:conv_stoch_dom_levy} is then established by showing that the stochastic dominance is preserved in the limit.
  
  \begin{proposition}\label{prop:conv_stoch_dom_euler}
    Let $(R_i)_i$ denote a sequence of i.i.d.\ random variables and let $\hat\pi_\agent{RL}$, $\hat\pi_\agent{RA}\in\BbbR$ satisfying $\mathrm{sgn}(\hat\pi_\agent{RL})=\mathrm{sgn}(\hat\pi_\agent{RA})$ and $|\hat\pi_\agent{RL}|\geq |\hat\pi_\agent{RA}|$. Then 
    \begin{equation*}
      \prod_{i=1}^N\left(1+\hat\pi_\agent{RA}(R_i-\E[R_1])\right)\conv\prod_{i=1}^N\left(1+\hat\pi_\agent{RL}(R_i-\E[R_1])\right) \quad\forall N\in\BbbN.
    \end{equation*}
  \end{proposition}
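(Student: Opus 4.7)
The plan is to induct on $N$. Setting $\tilde R_i := R_i - \E[R_1]$ so that $\E[\tilde R_i]=0$, write
\[ X_N^M := \prod_{i=1}^N\bigl(1+\hat\pi_\agent{RA}\tilde R_i\bigr), \qquad X_N^L := \prod_{i=1}^N\bigl(1+\hat\pi_\agent{RL}\tilde R_i\bigr); \]
by independence both have mean one, so matching means is automatic at every step. The sign hypothesis lets me write $\hat\pi_\agent{RA}=\lambda\hat\pi_\agent{RL}$ with $\lambda\in[0,1]$ (the degenerate case $\hat\pi_\agent{RL}=0$ forces $\hat\pi_\agent{RA}=0$ and is trivial), giving the key algebraic identity
\[ 1+\hat\pi_\agent{RA}\tilde R_i \;=\; \lambda\bigl(1+\hat\pi_\agent{RL}\tilde R_i\bigr)+(1-\lambda)\,\E\bigl[1+\hat\pi_\agent{RL}\tilde R_i\bigr]. \]
For the base case $N=1$ and any convex $c$, convexity of $c$ followed by Jensen's inequality yields $\E[c(X_1^M)] \leq \lambda\E[c(X_1^L)] + (1-\lambda)c(\E[X_1^L]) \leq \E[c(X_1^L)]$.

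For the inductive step I would fix a convex $c$ and introduce the auxiliary functions
\[ g_M(y) := \E\bigl[c(y(1+\hat\pi_\agent{RA}\tilde R))\bigr], \qquad g_L(y) := \E\bigl[c(y(1+\hat\pi_\agent{RL}\tilde R))\bigr], \]
where $\tilde R$ is an independent copy of $\tilde R_1$. Since $R_{N+1}$ is independent of $\mathcal F_N:=\sigma(R_1,\dots,R_N)$ and $X_N^M, X_N^L$ are $\mathcal F_N$-measurable, conditioning gives $\E[c(X_{N+1}^M)]=\E[g_M(X_N^M)]$ and $\E[c(X_{N+1}^L)]=\E[g_L(X_N^L)]$. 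The inductive step then reduces to two pointwise observations: \emph{(a)} $g_L$ is convex on $\R$, because for each outcome of $\tilde R$ the map $y\mapsto c(y(1+\hat\pi_\agent{RL}\tilde R))$ is the composition of the convex $c$ with an affine function of $y$; and \emph{(b)} $g_M(y)\leq g_L(y)$ for every $y\in\R$, obtained from the scaled identity $y(1+\hat\pi_\agent{RA}\tilde R)=\lambda\,y(1+\hat\pi_\agent{RL}\tilde R)+(1-\lambda)\,\E[y(1+\hat\pi_\agent{RL}\tilde R)]$ by repeating the base-case argument verbatim. Combining (b), the inductive hypothesis $X_N^M\leq_C X_N^L$, and the convexity of $g_L$ from (a) gives
\[ \E[c(X_{N+1}^M)] = \E[g_M(X_N^M)] \leq \E[g_L(X_N^M)] \leq \E[g_L(X_N^L)] = \E[c(X_{N+1}^L)], \]
which closes the induction.

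The main subtlety I anticipate is that $X_N^M$ and $X_N^L$ may take either sign, so both the convexity claim in (a) and the identity used in (b) must hold uniformly in $y\in\R$; fortunately both are algebraic facts that do not require positivity, and $y\mapsto c(ay)$ is convex for any slope $a\in\R$. Mild integrability of $c(X_N^{M,L})$ is needed to justify the interchanges of expectation and conditioning, but any reasonable moment hypothesis on $R_1$ (in particular the square-integrability imposed in Theorem \ref{thm:conv_stoch_dom_levy}) comfortably suffices.
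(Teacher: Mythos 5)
Your proof is correct and follows essentially the same route as the paper's: induction on $N$, with the inductive step performed by swapping one factor at a time using independence (your conditioning argument and the convexity of $g_L$ are exactly the paper's Lemma \ref{lem:inher_conv_stoch_dom}) together with the single-factor comparison $1+\hat\pi_\agent{RA}\tilde R\conv 1+\hat\pi_\agent{RL}\tilde R$ (the paper's Lemma \ref{lem:stoch_dom_rescaling}). The only genuine difference is local: you establish that single-factor comparison via the convex-combination identity plus Jensen's inequality, which is arguably cleaner than the paper's direct computation with call payoffs and a case distinction on the sign of the strike.
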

  \begin{proof} By induction on $N$. For $N=0$ the assertion is trivial.

    For the  induction step $N-1\to N$ we apply Lemma \ref{lem:inher_conv_stoch_dom} (using the independence of the $R_i$ and the induction hypothesis) to obtain
    \begin{align*}
      &\left(\prod_{i=1}^{N-1}(1+\hat\pi_\agent{RA}(R_i-\E[R_1]))\right)(1+\hat\pi_\agent{RA}(R_N-\E[R_1]))\\
	&\qquad\conv \left(\prod_{i=1}^{N-1}(1+\hat\pi_\agent{RL}(R_i-\E[R_1]))\right)(1+\hat\pi_\agent{RA}(R_N-\E[R_1])).
    \end{align*}
    As Lemma \ref{lem:stoch_dom_rescaling} implies $(1+\hat\pi_\agent{RA}(R_N-\E[R_1]))\conv (1+\hat\pi_\agent{RL}(R_N-\E[R_1]))$, applying Lemma \ref{lem:inher_conv_stoch_dom} once again proves the result.
  \end{proof}
  
  Now we are in the position to prove Theorem \ref{thm:conv_stoch_dom_levy}:
    
  \begin{proof}[Proof of Theorem \ref{thm:conv_stoch_dom_levy}]
    Since the optimal fraction for power utility is independent of the initial capital $x$, we set w.l.o.g.\ $x=1$. By \cite[Proposition 4.4]{temme11notes}, $|\hat\pi_\agent{RA}|\leq|\hat\pi_\agent{RL}|$ and $\hat\pi_\agent{RA}$, $\hat\pi_\agent{RL}$ are non-negative (non-positive) if $b:=\expval{R_1}$ is non-negative (non-positive). Thus, Proposition \ref{prop:conv_stoch_dom_euler} implies
    \begin{equation}\label{eq:thm_conv_stoch_dom_levy}
      \prod_{i=1}^N\left(1+\hat\pi_\agent{RA}\left(\Delta_i^NR-\frac{bT}{N}\right)\right)\conv\prod_{i=1}^N\left(1+\hat\pi_\agent{RL}\left(\Delta_i^NR-\frac{bT}{N}\right)\right),
    \end{equation}
    where $\Delta_i^NR:=R_\frac{iT}{N}-R_\frac{(i-1)T}{N}$ and $\expval{\Delta_i^NR}=bT/N$. The left- and right-hand side of \eqref{eq:thm_conv_stoch_dom_levy} are Euler approximations on an equidistant grid with mesh width $T/N$ of the SDEs
    \begin{equation*}
      d\bar X^i_t=\hat\pi_i\bar X^i_t \;d\bar R_t\qquad i=\agent{RA},\agent{RL},
    \end{equation*}
    where $\bar R_t=R_t-bt$. Since $R$ is square-integrable, \cite[Theorem A.2]{temme11notes} shows that the Euler approximations converge in $L^1$ to the respective stochastic exponentials. As stochastic dominance is preserved under $L^1$-convergence we find
    \begin{equation*}
      \mathcal{E}(\hat\pi_\agent{RA} \bar{R})_T\conv \mathcal{E}(\hat\pi_\agent{RL} \bar{R})_T.
    \end{equation*}
    Using $\mathcal{E}(\hat\pi_i \bar{R})_T=\stochexp{\hat\pi_i R}_T\exp(-\hat\pi_i bT)$ and $\exp((\hat\pi_\agent{RL}-\hat\pi_\agent{RA})bT)\geq1$, Lemma \ref{lem:stoch_dom_rescaling} further implies
    \begin{equation*}
      \stochexp{\hat\pi_\agent{RA} R}_T\conv \stochexp{\hat\pi_\agent{RL} R}_T-\left(\exp(\hat\pi_\agent{RL}bT)-\exp(\hat\pi_\agent{RA}bT)\right).
    \end{equation*}
    By $\E[\stochexp{\hat\pi_iR}_T]=\exp(\hat\pi_i bT)$ the claim is proved.
  \end{proof}

%\subsection{Extension to Exponential Utility Functions} 
%Theorem \ref{thm:conv_stoch_dom_levy} can be extended to exponential utility functions $U(x)=-\exp(-p x)$, i.e., to investors with constant absolute risk aversion $p>0$. Indeed, the main idea for the proof of Theorem \ref{thm:conv_stoch_dom_levy} was that the less risk averse investor \agent{RL} invests a higher fraction of her current wealth in the risky agent than $\agent{RA}$, i.e., $|\hat\pi_\agent{RL}|\geq|\hat\pi_\agent{RA}|$. In exponential \Levy\ models, agents with constant absolute risk aversion optimally invest a constant \emph{amount of money} $\hat\alpha$ in the stock, cf., e.g., \cite{kallsen00optimal,fujiwara04minimal}. In addition, one can show that a more risk averse investor $\agent{RA}$ invests less than a less risk averse investor $\agent{RL}$, i.e, $\hat\alpha_\agent{RA}\leq \hat\alpha_\agent{RL}$. Moreover, both policies have the same sign. Rewriting the optimal wealth processes in terms of \emph{money} invested in the risky asset gives $\hat X_T^L=x+\hat\alpha^L R_T$ resp.\ $\hat X_T^M=x+\hat\alpha^M R_T$. Hence, Lemma \ref{lem:stoch_dom_rescaling} directly proves the stochastic dominance $\hat X_T^\agent{RA}-\E[\hat X_T^\agent{RA}]\conv \hat X_T^\agent{RL} -\E[\hat X_T^\agent{RL}]$.

\subsection{Extension to Models with Conditionally Independent Increments}
One can also extend Theorem \ref{thm:conv_stoch_dom_levy} to somewhat more general models of the risky asset $S$. 

Indeed, the proof of Theorem \ref{thm:conv_stoch_dom_levy} and corresponding auxiliary results only use the independence of the increments of the L\'evy process, but not their identical distribution. Hence, one can prove the convex order result of Theorem \ref{thm:conv_stoch_dom_levy} along the same lines for processes $S_t=\mathcal{E}(R)_t$, where $R$ has independent (but not necessarily identically distributed) increments. In this market model, the optimal policy for power utility is to invest a time-dependent but deterministic fraction $\hat\pi_t$ in the risky asset; the corresponding optimal wealth processes is then given by $\mathcal{E}(\int_0^\cdot\hat\pi_s dR_s)$.
  
This in turn allows to extend Theorem \ref{thm:conv_stoch_dom_levy} also to models with \emph{conditionally independent increments} (cf.\ \cite[Chapter II.6]{jacod_shiryaev03limit} for more details). Loosely speaking, this means that the return process $R$ of the risky asset $S$ is assumed to have independent increments with respect to an augmented filtration $\mathcal{G}_t$, that is, conditional on some stochastic factor processes. If these extra state variables are independent of the process driving the returns of the risky asset, Kallsen and Muhle-Karbe \cite{kallsen.muhlekarbe.10} show that the optimal policy is the same, both relative to the original and to the augmented filtration. With respect to the latter, one is dealing with a process with independent returns, such that  Theorem \ref{thm:conv_stoch_dom_levy} holds true. Statement \eqref{WeakResult} for the original filtration then follows immediately from the law of iterated expectations.

\renewcommand{\theequation}{A.\arabic{equation}}

\appendix
  \renewcommand{\thesection}{\Alph{section}}
  \renewcommand{\thedefinition}{\Alph{section}.\arabic{definition}}
\section{Auxiliary Results on the Convex Order}

In this appendix, we state and prove two elementary results on the convex order, that are needed for the proof of Proposition 
\ref{prop:conv_stoch_dom_euler}.

  \begin{lemma}\label{lem:stoch_dom_rescaling}
	Let $X$ be a random variable and $a\geq 1$ a real number. Then 
	$$X\conv a X-(a-1)\E[X].$$
  \end{lemma}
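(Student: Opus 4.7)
The plan is to verify the convex-order inequality directly from the definition, exploiting the fact that $a \geq 1$ allows us to write $X$ as a convex combination of $aX-(a-1)\E[X]$ and the constant $\E[X]$.

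Set $\mu := \E[X]$ and $W := aX-(a-1)\mu$. The first thing I would note is that $\E[W]=\mu$, so the means of $X$ and $W$ agree, which is consistent with the convex order (and is a necessary condition, cf.\ Lemma \ref{lem:convmon_mon}(ii)). Next, since $a \geq 1$, the weights $\tfrac{1}{a}$ and $\tfrac{a-1}{a}$ are non-negative and sum to $1$, so I can write
\begin{equation*}
X \;=\; \frac{1}{a}\bigl(aX-(a-1)\mu\bigr) + \frac{a-1}{a}\mu \;=\; \frac{1}{a} W + \frac{a-1}{a}\mu
\end{equation*}
as a genuine convex combination of $W$ and the deterministic value $\mu$.

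Now fix any convex function $c\colon\R\to\R$. By convexity applied pointwise,
\begin{equation*}
c(X) \;\leq\; \frac{1}{a}\,c(W) + \frac{a-1}{a}\,c(\mu).
\end{equation*}
Taking expectations and then using Jensen's inequality, $c(\mu)=c(\E[W])\leq \E[c(W)]$, gives
\begin{equation*}
\E[c(X)] \;\leq\; \frac{1}{a}\E[c(W)] + \frac{a-1}{a}\,c(\E[W]) \;\leq\; \frac{1}{a}\E[c(W)] + \frac{a-1}{a}\E[c(W)] \;=\; \E[c(W)],
\end{equation*}
which is precisely $X \leq_C W = aX-(a-1)\E[X]$.

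There is really no main obstacle here: the only mild subtlety is that the two applications of convexity pull in opposite directions, so one must carefully note that the final Jensen step goes through because $\mu$ is \emph{both} the value appearing inside $c$ in the pointwise convexity bound \emph{and} the mean of $W$. If one wanted a more conceptual proof, one could instead invoke Strassen's theorem by constructing a martingale coupling of $X$ and $W$ (e.g., mixing $X$ with a suitable auxiliary variable), but the direct computation above is shorter and entirely elementary.
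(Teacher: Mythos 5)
Your proof is correct, and it takes a genuinely different route from the paper's. You write $X=\tfrac{1}{a}W+\tfrac{a-1}{a}\E[W]$ with $W:=aX-(a-1)\E[X]$ and combine pointwise convexity with Jensen's inequality at the common mean $\mu=\E[W]=\E[X]$; this is the classical ``a random variable is dominated in convex order by any dilation about its mean'' argument, and it delivers the inequality $\E[c(X)]\leq\E[c(W)]$ for \emph{every} convex $c$ in one stroke (with the usual caveat that one should either restrict to $c$ for which $\E[c(W)]$ is well defined, or, as suffices here by Lemma \ref{lem:convmon_mon}, simply apply your argument to the call functions $c(x)=(x-K)^+$, for which integrability of $X$ makes everything finite). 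The paper instead first invokes Lemma \ref{lem:convmon_mon} to reduce to hockey-stick functions, centers $X$, and then runs an explicit case analysis on the sign of $K$ with direct manipulations of $\int (x-K)^+\,\P^X(dx)$ versus $\int (ax-K)^+\,\P^X(dx)$. Your version is shorter, avoids the case split, and makes the structural reason for the result transparent (it also immediately yields the Strassen martingale coupling you allude to, since $\E[W\,|\,X]=W$ trivially and one can condition the mixture the other way); the paper's version is more pedestrian but stays entirely within the call-function framework it has already set up and never needs Jensen. Either proof is acceptable.
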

    \begin{proof}
      By Lemma \ref{lem:convmon_mon}, we have to prove $$\expval{\left(X-K\right)^+}\leq \expval{\left(aX-(a-1)\expval{X}-K\right)^+}$$ for all $K\in\BbbR$. By centering, it is easily seen that it sufficies to show the result for random variables $X$ with $\E[X]=0$.
    
      Let $K<0$. Since $\E[X]=0$ and $\int_{-\infty}^K x\, \BbbP^X(dx)\leq 0$ imply $\int_K^\infty x\, \BbbP^X(dx)\geq 0$, we find
      \begin{align*}
	\E[\left(X-K\right)^+]&\leq \int_K^\infty (x-K)\;\BbbP^X(dx)+(a-1)\int_K^\infty x \;\BbbP^X(dx),\\
		  &\leq \int_{K/a}^\infty (ax-K)\;\BbbP^X(dx)=\expval{\left(aX-K\right)^+}.
      \end{align*}
      Next assume $K\geq 0$. As $\int_{K/a}^K (ax-K)\; \BbbP^X(dx)\geq 0$, we  conclude
      \begin{align*}
       \expval{\left(X-K\right)^+}&\leq \int_K^\infty (ax-K)\;\BbbP^X(dx)\\
       &\leq \int_\frac{K}{a}^\infty (ax-K)\;\BbbP^X(dx)=\expval{\left(aX-K\right)^+}.
      \end{align*}
    \end{proof}

    \begin{lemma}\label{lem:inher_conv_stoch_dom}
      Let $X\conv Y$ and let $Z$ be independent of $X$ and $Y$. Then 
      $$XZ\conv YZ.$$
    \end{lemma}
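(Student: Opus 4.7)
The plan is to reduce the claim about $XZ$ and $YZ$ to a one-parameter family of instances of the hypothesis $X \conv Y$, indexed by the values of $Z$, and then to integrate out $Z$ using independence.

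More concretely, fix an arbitrary convex function $c\colon \R \to \R$ for which the expectations in question are well-defined. The key elementary observation is that for every $z \in \R$ the function $c_z(x) := c(zx)$ is again convex, since $x \mapsto zx$ is affine and composition of a convex function with an affine function is convex (this works regardless of the sign of $z$, so no case distinction is needed). Consequently, the hypothesis $X \conv Y$ applied to the convex function $c_z$ yields
\begin{equation*}
\E[c(zX)] \;\le\; \E[c(zY)] \qquad \text{for every } z \in \R.
\end{equation*}

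In the second step I would use the independence of $Z$ from $X$ and from $Y$ to write, via conditioning and Fubini,
\begin{equation*}
\E[c(XZ)] \;=\; \int \E[c(zX)]\, \BbbP^Z(dz), \qquad \E[c(YZ)] \;=\; \int \E[c(zY)]\, \BbbP^Z(dz),
\end{equation*}
and then integrate the pointwise inequality from the previous display against the law of $Z$ to conclude $\E[c(XZ)] \le \E[c(YZ)]$. Since $c$ was an arbitrary convex function, this gives $XZ \conv YZ$.

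The only subtlety worth flagging is integrability: one should verify that $XZ$ and $YZ$ have finite first moments (so that the convex order is well-posed) and that Fubini applies to the functions $c(zX)$, $c(zY)$. In the setting in which the lemma is invoked in the proof of Proposition~\ref{prop:conv_stoch_dom_euler}, $X$, $Y$ and $Z$ are bounded below by products of factors $1+\hat\pi(R_i-\E[R_1])$ arising from admissible wealth dynamics, so square integrability of $R$ together with the scalar bounds on $\hat\pi_\agent{RA},\hat\pi_\agent{RL}$ ensures that all relevant integrals are finite and the interchange of integrals is legitimate. I do not expect any genuine obstacle here; the lemma is essentially an application of Fubini to the defining inequality of $\conv$.
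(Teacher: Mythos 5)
Your proof is correct and follows essentially the same route as the paper's: the paper likewise fixes a convex $c$, observes that $\widetilde c(x):=c(zx)$ is convex for each fixed $z$, and uses the product structure $\BbbP^{ZX}=\BbbP^Z\otimes\BbbP^X$, $\BbbP^{ZY}=\BbbP^Z\otimes\BbbP^Y$ to integrate the resulting family of inequalities over the law of $Z$. Your additional remarks on integrability and Fubini only make explicit what the paper leaves implicit.
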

    \begin{proof}
    Let $c:\BbbR\to\BbbR$ be convex. Then the result easily follows from $\BbbP^{ZX}=\BbbP^Z\otimes\BbbP^X$, $\BbbP^{ZY}=\BbbP^Z\otimes\BbbP^Y$, and since the function $\widetilde c(x):=c(zx)$ is again convex for all fixed $z\in\BbbR$.
    \end{proof}

\bibliography{bibliography}

\begin{thebibliography}{10}

\bibitem{arrow65aspects}
K.J. Arrow.
\newblock {\em {Aspects of the {T}heory of {R}isk-{B}earing}}.
\newblock Yrj{\"o} Jahnssonin S{\"a}{\"a}ti{\"o}, Helsinki, 1965.

\bibitem{bellini.fritelli.02}
F.~Bellini and M.~Fritelli.
\newblock On the existence of minimax martingale measures.
\newblock {\em Math. Finance}, 12(1):1--21, 2002.

\bibitem{benth_karlsen01optimal}
F.E. Benth, K.H. Karlsen, and K.~Reikvam.
\newblock Optimal portfolio selection with consumption and nonlinear
  integro-differential equations with gradient constraint: A viscosity solution
  approach.
\newblock {\em Finance Stoch.}, 5(3):275--303, 2001.

\bibitem{dybvig_wang09}
P.H. Dybvig and Y.~Wang.
\newblock {Increases in risk aversion and the distribution of portfolio
  payoffs}.
\newblock {\em J. Econ. Theory}, 2011.
\newblock To appear.

\bibitem{framstad.al.99}
N.~Framstad, B.~{\O}ksendal, and A.~Sulem.
\newblock Optimal consumption and portfolio in a jump diffusion market.
\newblock Technical Report~5, Norges handelsh\o yskole. Institutt for
  foretaks\o konomi, 1999.

\bibitem{jacod_shiryaev03limit}
J.~Jacod and A.N. Shiryaev.
\newblock {\em {Limit Theorems for Stochastic Processes}}.
\newblock Springer, Berlin, second edition, 2003.

\bibitem{kallsen00optimal}
J.~Kallsen.
\newblock Optimal portfolios for exponential {L}\'evy processes.
\newblock {\em Math. Meth. Oper. Res.}, 51(3):357--374, 2000.

\bibitem{kallsen.muhlekarbe.10}
J.~Kallsen and J.~Muhle-Karbe.
\newblock Utility maximization in models with conditionally independent
  increments.
\newblock {\em Ann. Appl. Probab.}, 20(6):2162--2177, 2010.

\bibitem{kihlstrom.al.81}
R.~Kihlstrom, D.~Romer, and S.~Williams.
\newblock Risk aversion with random initial wealth.
\newblock {\em Econometrica}, 49(4):911--920, 1981.

\bibitem{kramkov.schachermayer.99}
D.~Kramkov and W.~Schachermayer.
\newblock {The asymptotic elasticity of utility functions and optimal
  investment in incomplete markets}.
\newblock {\em Ann. Appl. Probab.}, 9(3):904--950, 1999.

\bibitem{kramkov_sirbu06sensitivity}
D.~Kramkov and M.~Sirbu.
\newblock {Sensitivity analysis of utility-based prices and risk-tolerance
  wealth processes}.
\newblock {\em Ann. Appl. Probab.}, 16(4):2140--2194, 2006.

\bibitem{nutz09power}
M.~Nutz.
\newblock {Power utility maximization in constrained exponential L{\'e}vy
  models}.
\newblock {\em Math. Finance}, 2011.
\newblock To apear.

\bibitem{pratt64risk}
J.W. Pratt.
\newblock {Risk aversion in the small and in the large}.
\newblock {\em Econometrica}, 32(1):122--136, 1964.

\bibitem{samuelson69lifetime}
P.A. Samuelson.
\newblock {Lifetime portfolio selection by dynamic stochastic programming}.
\newblock {\em Rev. Econ. Statist.}, 51(3):239--246, 1969.

\bibitem{schachermayer.02}
W.~Schachermayer.
\newblock A super-martingale property of the optimal portfolio process.
\newblock {\em Finance. Stoch}, 7(1):433--456, 2003.

\bibitem{schachermayer_sirbu09financial}
W.~Schachermayer, M.~S{\^\i}rbu, and E.~Taflin.
\newblock {In which financial markets do mutual fund theorems hold true?}
\newblock {\em Finance Stoch.}, 13(1):49--77, 2009.

\bibitem{strassen65existence}
V.~Strassen.
\newblock {The existence of probability measures with given marginals}.
\newblock {\em Ann. Math. Statist.}, 36(2):423--439, 1965.

\bibitem{temme11notes}
J.~Temme.
\newblock {Power utility maximization in discrete-time and continuous-time
  exponential L\'{e}vy models}.
\newblock Preprint, 2011.

\end{thebibliography}
\bibliographystyle{plain}
\end{document}